\newtheorem{definition}{Definition}
\newtheorem{proposition}{Proposition}
\begin{document}
\begin{CJK*}{GBK}{song}
\title{\textbf{Potential Function in a Continuous Dissipative Chaotic System: Decomposition Scheme and Role of Strange Attractor}}

\author{Yian Ma}
\affiliation{Department of Computer Science and Engineering, Shanghai Jiao Tong University}
\author{Qijun Tan}
\affiliation{School of Mathematical Sciences, Fudan University, Shanghai, 200433, China}
\author{Ruoshi Yuan}
\affiliation{Department of Computer Science and Engineering, Shanghai Jiao Tong University}
\author{Bo Yuan}
\email{boyuan@sjtu.edu.cn.}
\affiliation{Department of Computer Science and Engineering, Shanghai Jiao Tong University}
\author{Ping Ao}
\email{aoping@sjtu.edu.cn.}
\affiliation{Shanghai Center for Systems Biomedicine and Department of Physics\\Shanghai Jiao Tong University, Shanghai, 200240, China}

\date{\today}
\begin{abstract}
In this paper, we demonstrate, first in literature known to us, that potential functions can be constructed in continuous dissipative chaotic systems and can be used to reveal their dynamical properties.
To attain this aim, a Lorenz-like system is proposed and rigorously proved chaotic for exemplified analysis.
We explicitly construct a potential function monotonically decreasing along the system's dynamics, revealing the structure of the chaotic strange attractor.
The potential function can have different forms of construction.
We also decompose the dynamical system to explain for the different origins of chaotic attractor and strange attractor.
Consequently, reasons for the existence of both chaotic nonstrange attractors and nonchaotic strange attractors are clearly discussed within current decomposition framework.
\end{abstract}
\maketitle
\end{CJK*}

\begin{quotation}
Potential function (also known as energy function, generalized Hamiltonian, or Lyapunov function, under different contexts) describes nonlinear dynamical system from a global point of view.
Along this scalar function, all the states in phase space move downward.
Consequently, the potential function accounts for both detailed structure and long term trend of the system's dynamics, indicating its performance and stability at the same time.
It is also anticipated that when the potential function of a system becomes constant, the system has evolved into an ``attractor".
Hence, potential function has special theoretical importance to chaotic system in that it helps reveal the complex structure of chaotic attractor.
Unfortunately, fundamental difficulties pertaining to its construction in nonlinear dynamical systems still need solution.
Failure in its construction has even prompted claims that potential function does not exist in complex systems \cite{strogatz2000nonlinear,rice}.
In this paper, we demonstrate that potential functions can be constructed in chaotic dynamical systems.
We first present a simplified geometrical Lorenz attractor, and rigorously prove that it is chaotic by its Poincar\'e map (which is itself an effort interesting to many researchers \cite{Chen02generateAttractor}).
Then we analytically construct a potential function for the system, accounting for the structure of the chaotic attractor.
With this potential function, we discover that chaotic attractor may not be a strange attractor and vice versa.
This corresponds to previous observations and is explained in detail by virtue of our constructive approach.
\end{quotation}

\section{introduction}
Nonlinear dynamics underlying many natural and technological systems is described by a set of ordinary differential equations:
\begin{align}
    \frac{d\mathbf{x}(t)}{dt}=\mathbf{f}(\mathbf{x}).
\end{align}
This description defines a fixed rule dictating the trend of evolution that a current state will follow into its immediate future.
Overtime, systems with even simple deterministic rules can give rise to seemingly ``fortuitous" phenomena \cite{Poincare}, generally known as chaos.
It remains an intriguing problem as to how these phenomena can be analyzed in a global sense \cite{Strogatz}, under the light of a generic framework in nonequilibrium dynamics \cite{Qian12Stochastic}.
Ideally, such a generic framework should account for: a unified (probably geometric) structure underlying the equations of evolution, a comparable measure of the system's different states, and an accurate reflection of the dynamic process generated by the system.

On an historical account, active search for such a generic framework begins in the 1970s, when Ren\'e Thom and Christopher Zeeman proposed that potential functions and their deformation \cite{Thom} can describe ``the evolution of form in all aspects of nature, and hence it embodies a theory of great generality" \cite{Zeeman}.
Thom's proposal is ingenious.
Because such a scalar function incorporates the previously mentioned attributes into one single quantity: potential function.
It not only generalizes existing approaches of Lyapunov function and first integral, but also encompasses concepts like stability \cite{Sole92Stability} and reversibility \cite{Breymann98Reversibility} into a uniform framework.
Even, natural and technological systems can directly be modeled with potential functions so that the behaviors can still be described ``even when all the internal parameters describing the system are not explicitly known" \cite{Thom}.

However, according to Stephen Smale, Thom's mathematical approach ``deals in with only a few known examples" \cite{Smale}, and hence lacks practical effectiveness.
What Thom and his followers failed to obtain is a proof of existence or even a method of construction for potential functions in complex systems, such as systems with oscillation or chaotic behaviors.
Frustration has even prompted claims that potential functions do not exist in the situations of complex dynamical systems \cite{strogatz2000nonlinear,rice}

Previously, we have rigorously defined potential function in mathematical terms (see definition \ref{def:Lyapunov}) and already demonstrated that potential functions (or Lyapunov functions) can be analytically constructed in oscillating systems \cite{Ao06limitcycle,Yian}.
In this paper, we further motivate such research by showing the construction of potential functions in chaotic systems, and providing additional insights for chaotic and strange attractors.

Actually, constructing potential-like functions in chaotic systems, functions with a restricted part of the properties held by potential functions, is an approach already taken by researchers.
Until very recently, there are still various efforts addressing the issue.
There are generalized Hamiltonian approach \cite{Sira}, energy-like function technique \cite{Sarasola}, minimum action method \cite{Weinan}, and etc., in search for a unified description of chaotic dynamics.
These previous methods all construct a potential-like function to analyze some chaotic system such as the Lorenz system \cite{Lorenz}.
Unfortunately, the scalar functions in these works all lack certain important properties (see section $6$).

Those important drawbacks of the existing methods motivate a real potential function to describe the behavior of chaotic systems.
Ideally, it should be a continuous function in the phase space, monotonically decreasing with time.
Also, when time approaches infinity, potential function should stabilize to some finite quantity if the original system is not divergent.
In this paper, we combine these ideas together as a potential function and try to obtain it in chaotic systems.

The paper is organized as follows.
First of all, we define an ideal potential function for the description of chaotic systems.
To analyze chaotic systems in detail with this potential function, we create an attractor that is chaotic by definition.
Then, potential function for this chaotic attractor is constructed, showing the structure of the chaotic strange attractor.
In addition, our framework provides a decomposition of the original vector field.
The decomposition helps understand the different origins for chaotic attractor and strange attractor, explaining why there exists both chaotic nonstrange attractors and nonchaotic strange attractors.

\section{potential function}
We first state the definition of a potential function, which is a natural description of dynamical systems with monotonic properties.
Then we will discuss a decomposition scheme of generic dynamical systems associated with the potential function.

\begin{definition}[Potential Function \cite{ao04potential,Ruoshi}]
\label{def:Lyapunov}

  Let $\Psi:\mathbb{R}^n\xrightarrow{}\mathbb{R}$ be a continuous function. Then $\Psi$ satisfying the following condition is called a potential function for the dynamical system $\dot{\mathbf{x}}=\mathbf{f}(\mathbf{x}):\mathbb{R}^n\xrightarrow{}\mathbb{R}^n$.
  \begin{enumerate}[(a)]
  \item
  $\dot{\Psi}(\mathbf{x})=d\Psi/dt|_\mathbf{x}\leqslant0$ for all $\mathbf{x}\in \mathbb{R}^n$  if  $\dot{\Psi}(\mathbf{x})$ exists.

  \item
  $\nabla\Psi(\mathbf{x^*})=0$ if and only if $\mathbf{x}^*\in \mathbb{A}$, where $\mathbb{A}$ is the attractor of the dynamical system: $\dot{\mathbf{x}}=\mathbf{f}(\mathbf{x})$.
  \end{enumerate}
\end{definition}

Here, an attractor $\mathbb{A}$ of a dynamical system with flow $\phi_t$ can be fixed point, limit cycle, or chaotic attractor.
In this sense, a potential function is a Lyapunov function.
An attractor $\mathbb{A}$ is formally defined as the following.

\begin{definition}[Attractor]
\label{def:Attractor}

  An attractor $\mathbb{A}$ of a dynamical system $\dot{\mathbf{x}}=\mathbf{f}(\mathbf{x})$ with flow $\phi_t$ is a compact invariant set, with an open set $U$ containing $\mathbb{A}$ such that for each $\mathbf{x}\in U$, $\phi_t(\mathbf{x})\in U$ for all $t\geqslant0$ and $\mathbb{A}=\bigcap_{t\geqslant0}\phi_t(U)$.
\end{definition}

This definition balances between different fashions of literatures \cite{hirsch,Shilnikov} and is exactly the same as the definition of an ``attracting set" in a classical textbook of dynamical systems \cite{robinson}.

\subsection{Decomposition Scheme}
Many efforts generalizing Hamiltonian or gradient dynamics realize that there generally exists a decomposition of dynamical systems $\dot{\mathbf{x}}=\mathbf{f}(\mathbf{x})$ \cite{Ruoshi,Daizhan,Wangjin}:
\begin{align}
\dot{\mathbf{x}}=\mathbf{f}(\mathbf{x})=M(\mathbf{x}) \nabla\Psi(\mathbf{x}).\nonumber
\end{align}

And $M(\mathbf{x})$ can be decomposed into:
\begin{align}
M(\mathbf{x})=J(\mathbf{x})-D(\mathbf{x})+Q(\mathbf{x}),\nonumber
\end{align}
where $J(\mathbf{x})$ and $D(\mathbf{x})$ are semi-positive definite symmetric matrices and $Q(\mathbf{x})$ is skew-symmetric.

In our definition of the potential function $\Psi(\mathbf{x})$, however, matrix $J(\mathbf{x})$ do not appear.
That is:
\begin{align}
\dot{\mathbf{x}}&=\mathbf{f}(\mathbf{x})\\\nonumber
&=-D(\mathbf{x}) \nabla\Psi(\mathbf{x})+Q(\mathbf{x}) \nabla\Psi(\mathbf{x}).\label{decomp}
\end{align}

This means that a generic system is only composed of an energy dissipating (gradient) part and an energy conserved (rotation) part.
For the gradient part, potential $\Psi$ is a common energy function; for the rotation part, $\Psi$ is a first integral.
And we can express these two parts once we find the potential function for the system \cite{Ruoshi}:
\begin{align}
    D=-\frac{\mathbf{f}\cdot\nabla\Psi}{\nabla\Psi\cdot\nabla\Psi}I,
\end{align}
and
\begin{align}
    Q=\frac{\mathbf{f}\times\nabla\Psi}{\nabla\Psi\cdot\nabla\Psi}.
\end{align}
Here, $I$ denotes identity matrix and the generalized cross product of two vectors defines a matrix:
$\mathbf{x} \times \mathbf{y}=A=(a_{ij})_{n\times n}=(x_iy_j-x_jy_i)_{n\times n}$.

Clearly, $\left(D \nabla\Psi(\mathbf{x})\right)\times\nabla\Psi=0$ and $\left(Q \nabla\Psi(\mathbf{x})\right)\cdot\nabla\Psi=0$, corresponding exactly to the curl-free component and divergence-free component in Helmholtz decomposition \cite{Helmholtz}.

Further, as have been discussed in the context of nonequilibrium thermal dynamics \cite{Ao07Decomp}, such two parts correspond to two different structures in geometry:
a dissipative bracket $\{\cdot,\cdot\}$; and a generalized Poisson bracket \cite{Arnold89Math} $[\cdot,\cdot]$ \footnote{To avoid confusion, we restrict the use of generalized Poisson brackets in this section (section $2$)}.

A dissipative bracket $\{\cdot,\cdot\}$ associates with matrix $D(\mathbf{x})$:
\begin{align}
\{f, g\}=\partial_i f D_{ij} \partial_j g,\nonumber
\end{align}
and is generally defined as symmetric: $\{f, g\}=\{g, f\}$;
and semi-positive definite: $\{f, f\}\geqslant 0$;
satisfying Leibniz' rule: $\{fg , h\}=f\{g , h\}+g\{f , h\}$.

While a generalized Poisson bracket $[\cdot,\cdot]$ associates with matrix $Q(\mathbf{x})$:
\begin{align}
\left[f, g\right]=\partial_i f Q_{ij} \partial_j g,\nonumber
\end{align}
and can be generally defined as antisymmetric: $\left[f, g\right]=-\left[g, f\right]$;
satisfying Leibniz' rule: $\left[fg , h\right]=f[g , h]+g[f , h]$.


Hence, the original differential equations can be expressed as:
\begin{align}
\dot x_i=-\{x_i,\Psi\}+\left[x_i,\Psi\right].\nonumber
\end{align}
That is, a generic dynamical system is a direct composition of the two well-studied geometric structures.

Later, this gradient-rotation decomposition would provide additional insight to the understanding of chaotic attractors and strange attractors.

\section{Simplified Geometric Lorenz Attractor}
As can be seen in previous works, many efforts have been made to analyze Lorenz system \cite{Lorenz} as a typical model for chaos.
Yet, to the best knowledge of the authors, there is only numerical evidence \cite{Binder99Numeric} that the Lorenz equations support a robust strange attractor \cite{Tucker}.
Total understanding of the Lorenz attractor, including but not limited to an analytic proof that the Lorenz attractor is chaotic is still lacking \cite{SmaleCentury}.

An early work \cite{John} attempted to study chaotic systems by constructing a geometric model in a piecewise fashion to resemble the Lorenz system.
The resultant ``geometric Lorenz attractor" from the piecewise model is studied in some depth and an analogy is made between it and the Lorenz system \cite{Tucker}.
This methodology is practically effective, yet the model system can become even simpler to be analytically proved as chaotic.

Hence, we start out constructing a simplified geometric Lorenz attractor.
The model system is described by piecewise continuous ordinary differential equations (ODE), similar to the ``geometric Lorenz attractor".
We integrate trajectories in each continuous region of the model system.
Then we reveal the structure of the attractor by finding the Poincar\'e map between the continuous regions.
Through the Poincar\'e map, the attractor is proved to be a chaotic attractor according to the widely applied definition \cite{robinson} of Devaney chaos.

\begin{figure}
\begin{center}
  \includegraphics[width=0.5\textwidth]{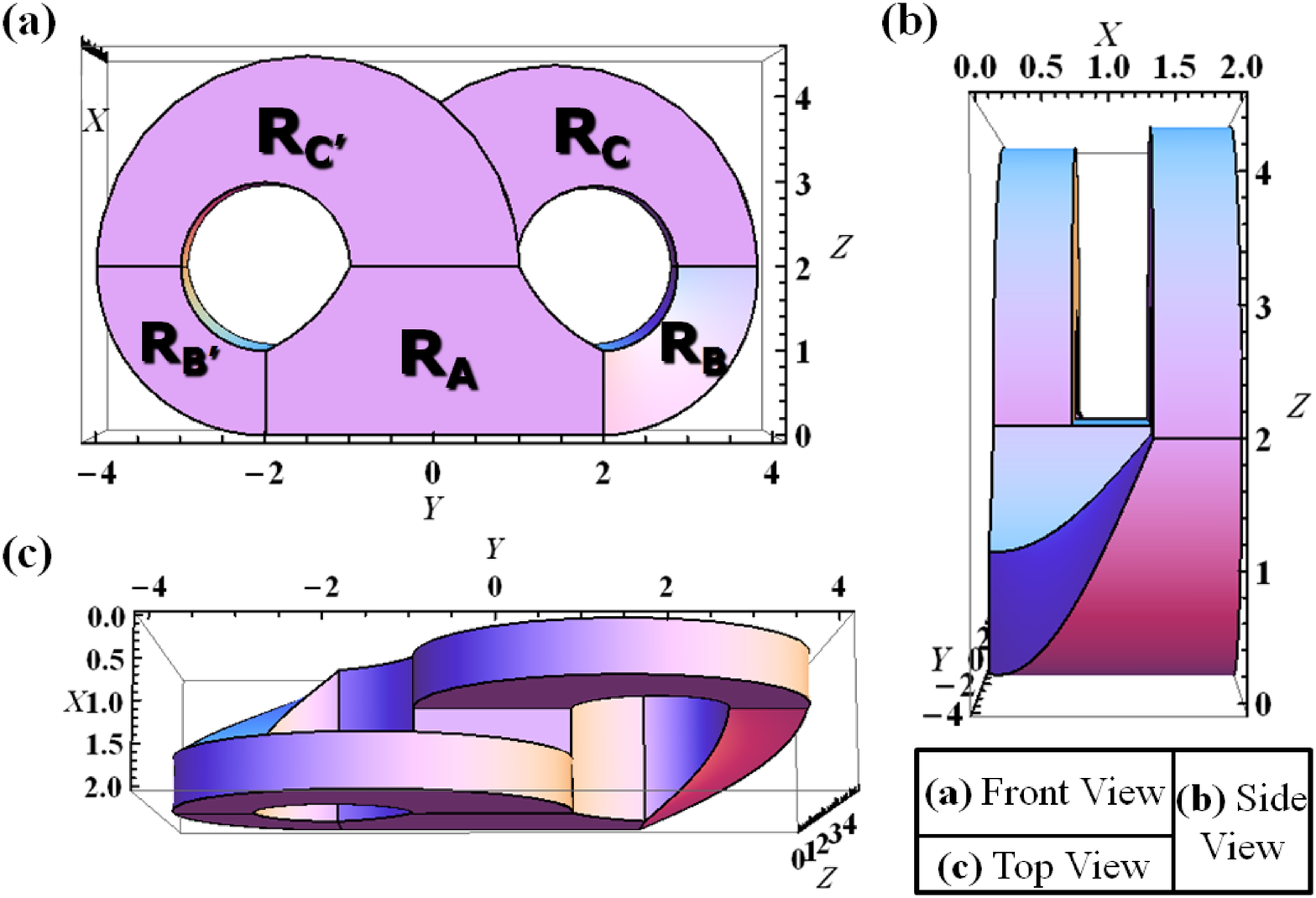}
\end{center}
\caption{
\begin{flushleft}
\textbf{Figure 1} $|$ \textbf{The Simplified Geometric Lorenz Attractor.}
The dynamical system we study here is defined piecewisely in region $R_A$, $R_B$, and $R_C$, along with region $R_{B'}$ and $R_{C'}$ as the symmetric counterparts of $R_B$ and $R_C$.
The front, side, and top view of the regions of definition are shown in panel \textbf{(a)} through panel \textbf{(c)} respectively.
Trajectories of this dynamical system would converge into an attractor $\mathbb{A}_L$ (see subsection D and figure $7$), which is a simplified version of the geometric Lorenz attractor \cite{John}.
\end{flushleft}
}
\label{fig:vector}
\end{figure}

\subsection{Model System Description}
The piecewise continuous ODE model is described in each continuous region (from $R_A$ to $R_C$, along with $R_{B'}$ and $R_{C'}$ as the symmetric counterparts of $R_B$ and $R_C$) as follows, corresponding to Figure ($1$).

\begin{enumerate}[1.]
\item In region $R_A$, where $x\in[0,2]$ \footnote{The square brackets in this (section $3$) and the following sections mean closed intervals, not the generalized Poisson brackets}, $y\in[-2,2]$, $z\in[0,2]$, $yz\in[-2,2]$:
\begin{equation}
\left\{
    \begin{array}{l}
     \dot x=0\\
     \dot y=y\\
     \dot z=-z.
     \end{array}
\right.
\end{equation}

Dynamics in this region is characterized by saddle points at $y=z=0$.
These saddle points are responsible for causing bifurcation in originally close trajectories.

\item Region $R_B$ is defined as: $x\in[0 \, , \; 2/3+8/(3\pi)\times \theta]$, $y\in(2,4]$, $z\in[0,2]$, $\sqrt{(z-2)^2+(y-2)^2}\in[1,2]$,
where:
\begin{align}
    \theta=\arccos{\dfrac{y-2}{\sqrt{(z-2)^2+(y-2)^2}}},\nonumber
\end{align}
denoting the angle that point $(y,z)$ form with respect to the center $(2,2)$.

In $R_B$:
\begin{equation}
\left\{
    \begin{array}{l}
     \dot x=-\dfrac{x}{\theta+\pi/4}\\
     \dot y=2-z\\
     \dot z=y-2.
     \end{array}
\right.\label{regionB}
\end{equation}

Trajectories in this region rotate for an angle of $\pi/2$ with respect to $y=z=2$ and contract in the $x$ direction.

\item In region $R_C$, where $x\in[0, \, 2/3]$, $y\in[-1,4]$, $z>2$, $\sqrt{(z-2)^2+(y-2)^2}\geqslant 1$, $\sqrt{(z-2)^2+(y-3/2)^2}\leqslant 5/2$:
\begin{equation}
\left\{
    \begin{array}{l}
     \dot x=0\\
     \dot y=2-z\\
     \dot z=\dfrac{9y}{8}-\dfrac{21}{8}+\dfrac{\sqrt{(3y-7)^2+8(z-2)^2}}{8}.
     \end{array}
\right.
\end{equation}

In this region, trajectories rotate for another angle of $\pi$ with respect to $y=z=2$ and expand in the $y$ direction.

The whole system is set symmetrical with respect to the line: $x=1$; $y=0$; $z\in\mathbb{R}$.
So we change the coordinate of $(x,y,z)$ into $(2-x,-y,z)$ to have expressions of the vector field in region $R_{B'}$ and $R_{C'}$ from expressions in region $R_B$ and $R_C$.

\item Region $R_{B'}$ is defined as: $x\in[4/3-8/(3\pi)\times \theta \, , \; 2]$, $y\in[-4,-2)$, $z\in[0,2]$, $\sqrt{(z-2)^2+(y+2)^2}\in[1,2]$,
where:
\begin{align}
    \theta=\arccos{\dfrac{-y-2}{\sqrt{(z-2)^2+(y+2)^2}}},\nonumber
\end{align}
denoting the angle that point $(y,z)$ form with respect to the center $(-2,2)$.

In $R_{B'}$:
\begin{equation}
\left\{
    \begin{array}{l}
     \dot x=\dfrac{2-x}{\theta+\pi/4}\\
     \dot y=z-2\\
     \dot z=-y-2.
     \end{array}
\right.
\end{equation}

Vector field in region $R_{B'}$ corresponds exactly to that in $R_B$.

\item In region $R_{C'}$, where $x\in[4/3, \, 2]$, $y\in[-4,1]$, $z>2$, $\sqrt{(z-2)^2+(y+2)^2}\geqslant 1$, $\sqrt{(z-2)^2+(y+3/2)^2}\leqslant 5/2$:
\begin{equation}
\left\{
    \begin{array}{l}
     \dot x=0\\
     \dot y=z-2\\
     \dot z=-\dfrac{9y}{8}-\dfrac{21}{8}+\dfrac{\sqrt{(3y+7)^2+8(z-2)^2}}{8}.
     \end{array}
\right.
\end{equation}

Vector field in region $R_{C'}$ corresponds exactly to that in region $R_C$.

We note that the model system in region $R_{B'}$ and $R_{C'}$ is just a change of variables of the system in region $R_B$ and $R_C$.
So, to avoid redundancy, we will only take region $R_A$, $R_B$ and $R_C$ to represent all the regions of definition in the following analysis.
\end{enumerate}

\subsection{Near Saddle-Focus Fixed Points}
We have constructed the model system containing one saddle fixed point.
As in the Lorenz system, there would actually be another two saddle-focus fixed points when the system expands to the whole $\mathbb{R}^3$ space.
Here, we complete the dynamical system near the two saddle-focus fixed points so that the convergence behavior away from the attractor can be further demonstrated.

We denote the regions of definition discussed here as region $R_D$ and $R_{D'}$ (see Figure \ref{fig:AttractorCenter}), each consisting of three parts: region $R_{D^A}$, $R_{D^B}$, and $R_{D^C}$ (for region $R_D$ as example).
Region $R_D$ and $R_{D'}$ are symmetrical with respect to the line: $x=1$; $y=0$; $z\in\mathbb{R}$, just as in the previous section.
Hence, we follow the convention stated in the previous section: to take region $R_{D^A}$, $R_{D^B}$, and $R_{D^C}$ representing their symmetrical counterparts.
The regions: $R_{D^A}$, $R_{D^B}$, and $R_{D^C}$ and the differential equations in them are written as the following.

\begin{figure}
\begin{center}
  \includegraphics[width=0.5\textwidth]{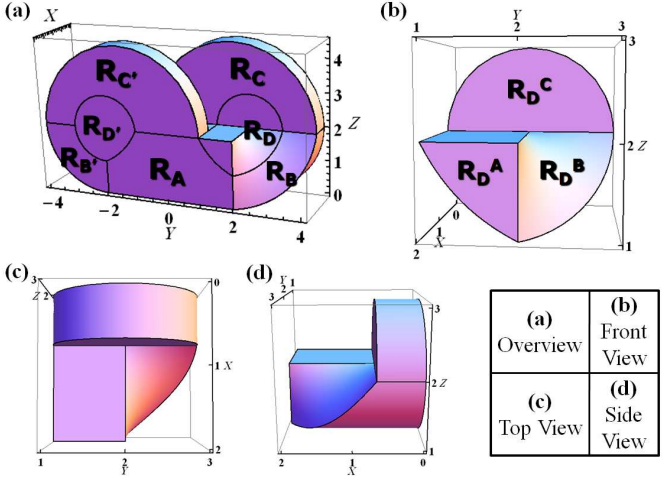}
\end{center}
\caption{
\begin{flushleft}
\textbf{Figure 2} $|$ \textbf{Near Saddle-Focus Fixed Points.}
When the system expand to contain region $R_D$ and $R_{D'}$, two saddle-focus fixed points would emerge.
This figure elaborates on the system near the two saddle-focus fixed points.
\textbf{(a)}, The regions containing the two saddle-focus fixed points, i.e. $R_D$ and $R_{D'}$ are shown along with other regions.
\textbf{(b-d)}, The front, top, and side view of region $R_D$ are shown respectively.
\end{flushleft}
}
\label{fig:AttractorCenter}
\end{figure}

\begin{enumerate}[1.]
\item Region $R_{D^A}$ is close to region $R_A$ and is defined as: $x\in[0,2]$, $y\in[1,2]$, $z\in[1,2]$, $yz\in(2,4]$.
We simply take differential dynamical system in it the same as that in region $R_A$:
\begin{equation}
\left\{
    \begin{array}{l}
     \dot x=0\\
     \dot y=y\\
     \dot z=-z.
     \end{array}
\right.
\end{equation}

Hence, states in this region are unstable in the $y$ direction and stable in the $z$ direction, causing a rotation effect.

\item Region $R_{D^B}$ is close to region $R_B$ and is defined as: $x\in[0 \, , \; 2/3+8/(3\pi)\times \theta]$, $y\in(2,3]$, $z\in[1,2]$, $\sqrt{(z-2)^2+(y-2)^2}<1$.

Here,
\begin{align}
    \theta=\arccos{\dfrac{y-2}{\sqrt{(z-2)^2+(y-2)^2}}},\nonumber
\end{align}
denoting the angle that point $(y,z)$ form with respect to the center $(2,2)$.

In region $R_{D^B}$:
\begin{equation}
\left\{
    \begin{array}{l}
     \dot x=-\dfrac{x}{\theta+\pi/4}\\
     \dot y=2-z\\
     \dot z=y-2.
     \end{array}
\right.\label{regionB}
\end{equation}

Same as in region $R_B$, trajectories in this region rotate for an angle of $\pi/2$ with respect to $y=z=2$ and contract in the $x$ direction.

\item In region $R_{D^C}$, where $x\in[0, \, 2/3]$, $y\in[1,3]$, $z>2$, $\sqrt{(z-2)^2+(y-2)^2}<1$:
\begin{equation}
\left\{
    \begin{array}{l}
     \dot x=0\\
     \dot y=2-z+\left(y-2\right)\left(\dfrac{1}{\sqrt{(z-2)^2+(y-2)^2}}-1\right)\\
     \dot z=y-2+\left(z-2\right)\left(\dfrac{1}{\sqrt{(z-2)^2+(y-2)^2}}-1\right).
     \end{array}
\right.
\end{equation}

In this region, trajectories tend to converge to the unit-radius circle centered at $y=z=2$.
Hence, states in the whole region $R_{D}$ are attracted to the circle: $x=0, \, \sqrt{(z-2)^2+(y-2)^2}=1$.

\end{enumerate}

Here, it is observable that region $R_D$, containing a saddle-focus fixed point, form a semi-stable limit cycle at $x=0, \, \sqrt{(z-2)^2+(y-2)^2}=1$ (when $y, z \in [1,2]$, the curve of the limit cycle changes expression to: $x=0, \, yz=2$).
This limit cycle locates at the boundaries between region $R_D$ and its adjacent regions.
This phenomenon corresponds with many observations that fixed points transit into chaotic behaviors through limit cycles \cite{Weinan}.

Also, this section demonstrates that the domain of definition in the model system is not restricted to the regions discussed above.
If we take dynamical system in the rest of $\mathbb{R}^3$ space converging into the defined regions ($R_A$ through $R_D$, $R_{B'}$ through $R_{D'}$), domain of definition can be expanded to the whole space.

\subsection{Trajectory and Poincar\'e Map}

\begin{figure}
\begin{center}
  \includegraphics[width=0.5\textwidth]{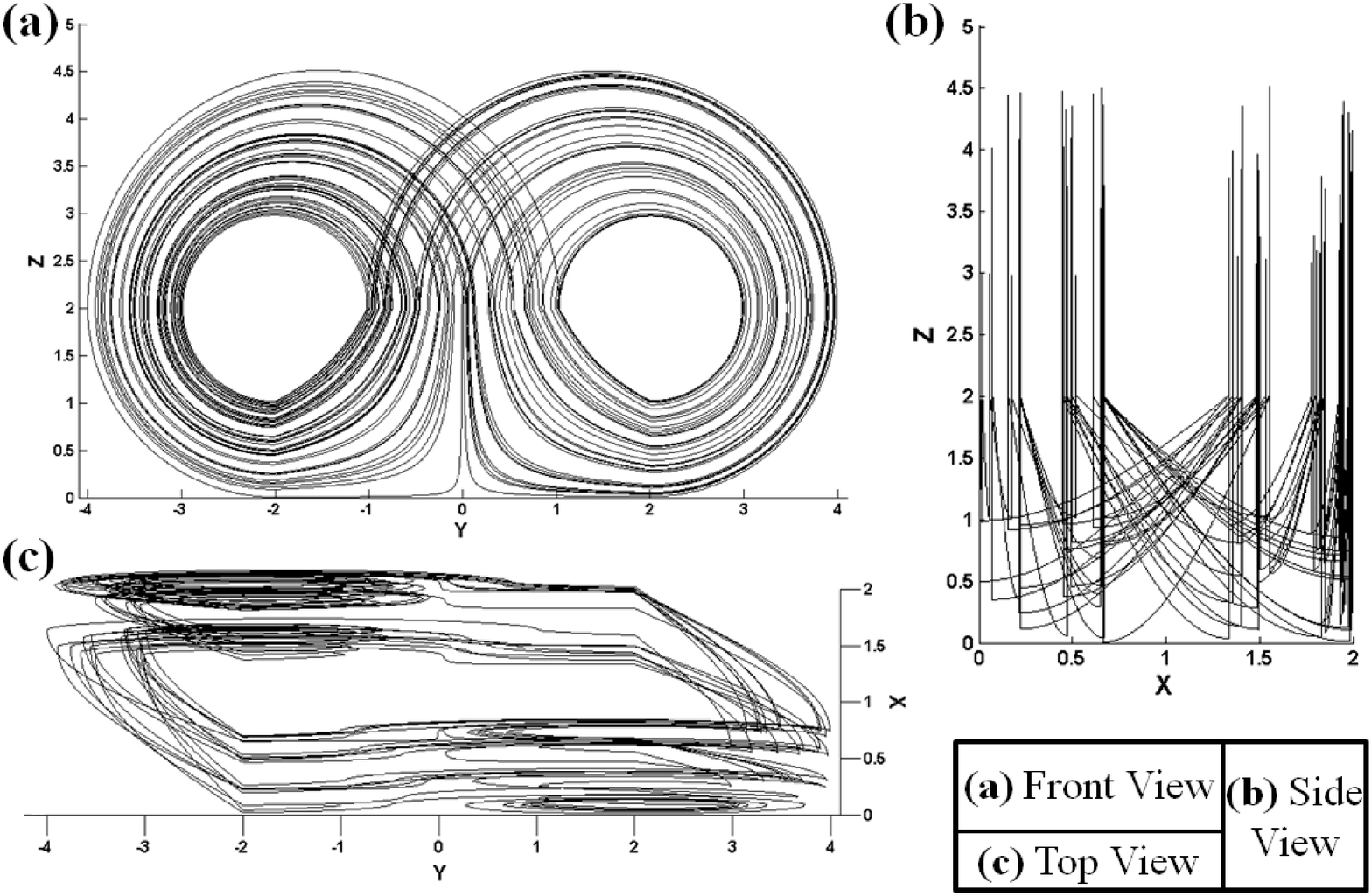}
\end{center}
\caption{
\begin{flushleft}
\textbf{Figure 3} $|$ \textbf{Simulated Trajectory of the System.}\\
We simulated a trajectory of the dynamical system constructed.
It appears to have ``erratic" behaviors, similar to the Lorenz system.
\end{flushleft}
}
\label{fig:simulation}
\end{figure}

Based on Equation (5-9), we simulate the trajectory of the dynamical system (shown in Figure (\ref{fig:simulation})).
Actually, trajectories in each region can be analytically solved.
To study the structure of the attractor, we solve the trajectories in region $R_A$, $R_B$, and $R_C$ respectively:
\begin{enumerate}[1.]
\item In region $R_A$, trajectories are represented as:
\begin{equation}
\left\{
    \begin{array}{l}
     x=x_0\\
     y=y_0 e^t\\
     z=z_0 e^{-t},
     \end{array}
\right.
\end{equation}
where $z_0$ can usually be taken as $2$.

Hence, states in this region are exponentially unstable in the $y$ direction and exponentially stable in the $z$ direction.

\item In region $R_B$, trajectories are:
\begin{equation}
\left\{
    \begin{array}{l}
     x=x_0\left(\dfrac{1}{3} - \dfrac{4}{3\pi} t \right)\\
     y=\sqrt{y_0^2+z_0^2} \cos{t} +2\\
     z=\sqrt{y_0^2+z_0^2} \sin{t} +2,
     \end{array}
\right.\label{trajB}
\end{equation}
where $y_0$ can be $2$.

Here, we can observe that $x(t)$ decreases monotonically while $y(t)$ and $z(t)$ form a circle.


\item In region $R_C$, trajectories are:
\begin{equation}
\left\{
    \begin{array}{l}
     x=x_0\\
     y=\sqrt{y_0^2+z_0^2} \cos{t} + \dfrac{1}{3}\left(1 - \sqrt{y_0^2+z_0^2}\right) + 2\\
     z=\sqrt{y_0^2+z_0^2} \sin{t} + 2,
     \end{array}
\right.
\end{equation}
where $z_0$ can be $2$.

Hence, trajectories in region $R_C$ move along circles determined by initial conditions.

\end{enumerate}

Then, to further study the structure of the attractor of the system, we want to calculate the Poincar\'e map of the system.
Here, we take Poincar\'e surface of section as: $z=2$, and find the resultant Poincar\'e map as a discrete dynamical system defined on $[0,2]\times[-1,1]$ (shown in Figure (\ref{fig:poincare}) and follows).


\begin{figure}
\begin{center}
  \includegraphics[width=0.5\textwidth]{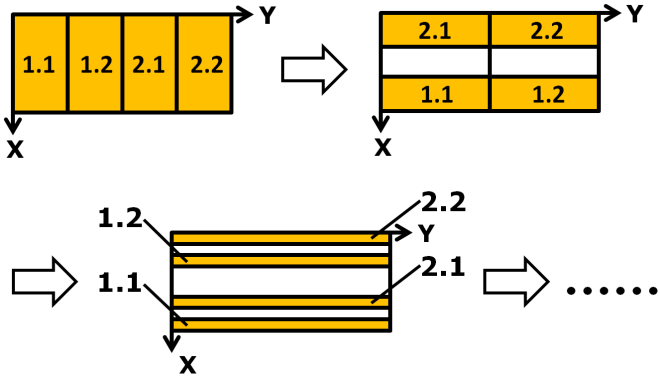}
\end{center}
\caption{
\begin{flushleft}
\textbf{Figure 4} $|$ \textbf{Poincar\'e Map of the Dynamical System.}\\
This Poincar\'e map is taken over the surface of $z=2$.
It can readily be observed that this map is a ``baker's map" \cite{Kuznetsov} and creates a Cantor set multiplying a real line segment.
\end{flushleft}
}
\label{fig:poincare}
\end{figure}

When $(x,y)\in [0,2]\times[0,1]$,
\begin{equation}
\left\{
    \begin{array}{l}
     x_{n+1}=\dfrac{1}{3}x_n\\
     y_{n+1}=2y_n-1;
     \end{array}
\right.
\end{equation}

When $(x,y)\in [0,2]\times[-1,0)$,
\begin{equation}
\left\{
    \begin{array}{l}
     x_{n+1}=\dfrac{1}{3}x_n+\dfrac{4}{3}\\
     y_{n+1}=2y_n+1.
     \end{array}
\right.
\end{equation}

The above discrete dynamical system is just a ``baker's map" defined in literatures before \cite{Kuznetsov}, and can be understood figuratively as the following.
In the $x$ direction, the mapping is contractive.
The square of definition: $[0,2]\times[-1,1]$ is contracted to the one third of it, forming a rectangle: $[0,2/3]\times[-1,1]$.
In the $y$ direction, the mapping is expansive just as the doubling map \cite{robinson}.
The rectangle is stretched to: $[0,2/3]\times[-3,1]$.
Then we keep the right half of the resulted rectangle and move the left half: $[0,2/3]\times[-3,-1)$ to the position: $[0,2/3]\times[-1,1)$.
It can readily be seen that the invariant set is formed by iteratively removing the middle third of the intervals along the $x$ direction.

\subsection{Attractor of the Model System}
We denote the attractor of the model system as: $\mathbb{A}_L$.
And with the Poincar\'e map of the model system defined as a dynamical system on $[0,2]\times[-1,1]$ (in the previous section), we denote its attractor as: $\mathbb{A}_P$.
In this section, we first express attractor $\mathbb{A}_P$ of the Poincar\'e map in terms of the Cantor set; then we express attractor $\mathbb{A}_L$ of the original model system.



We have already found the Poincar\'e map of the model system as iteratively removing the middle third of the invariant sets along the $x$ direction.
That is:
first, remove the set $(2/3,4/3)\times[-1,1]$;
then, remove the middle third of the left two sets $[0,2/3]\times[-1,1]$ and $[4/3,2]\times[-1,1]$;
and iterate the process all along.
We present all the removed intervals iteratively as the following:

\begin{align}
    \mathbb{C}_1=\left(\frac{2}{3},\frac{4}{3}\right)\times\left[-1,1\right],\nonumber
\end{align}
and
\begin{align}
    \mathbb{C}_{n+1}=\left(\frac{\mathbb{C}_n}{3}\bigcup\frac{\mathbb{C}_n+4}{3}\right)\times\left[-1,1\right].
\end{align}

Then the attractor of the Poincar\'e map is $[0,2]\times[-1,1]$ minus the union of all the sets $\mathbb{C}_i$:
\begin{align}
    \mathbb{A}_P&=\left[0,2\right]\times\left[-1,1\right]-\bigcup_{i=1}^{\infty}\mathbb{C}_i\times\left[-1,1\right]\nonumber\\
    &=\left(\left[0,2\right]-\bigcup_{i=1}^{\infty}\mathbb{C}_i\right)\times\left[-1,1\right]\nonumber\\
    &=\mathbb{C}\times\left[-1,1\right],
\end{align}
where $\mathbb{C}$ denotes the Cantor set \cite{frontier} defined on the interval of $[0,2]$.

Hence, attractor $\mathbb{A}_P$ of the Poincar\'e map is the Cantor set multiplying a real line segment.
We can calculate its box-counting dimension \cite{frontier} to be:
\begin{align}
d_b(\mathbb{A}_P)=\lim_{\epsilon\rightarrow0}\dfrac{\log N(\epsilon,\mathbb{A}_P)}{\log(1/\epsilon)}=1+ln(2)/ln(3).
\end{align}
Hence, the attractor $\mathbb{A}_P$ is of fractal dimension, a strange attractor \cite{NonStrangeChaotic}.

With the trajectories of the system analytically solved in each region, we further express attractor $\mathbb{A}_L$ of the model system as
(assuming $\theta=\arccos{(y-2)/\sqrt{(z-2)^2+(y-2)^2}}$):

In region $R_A$, $x\in\mathbb{C}$;

in region $R_B$, $\left(4/(3\pi)\times\theta+1/3\right)^{-1}x\in\mathbb{C}$;

in region $R_C$, $x\in\mathbb{C}$.

The box-counting dimension of attractor $\mathbb{A}_L$ is then calculated to be:
\begin{align}
d_b(\mathbb{A}_L)=\lim_{\epsilon\rightarrow0}\dfrac{\log N(\epsilon,\mathbb{A}_L)}{\log(1/\epsilon)}=2+ln(2)/ln(3).
\end{align}
It is hence a strange attractor with fractal dimension.



It will be proved in the following section that attractors $\mathbb{A}_P$ and $\mathbb{A}_L$ are also chaotic attractors.

\subsection{Proof of the Attractor as Chaotic}
By the widely applied definition \cite{robinson} of Devaney chaos, an attractor $\mathbb{A}$ is defined as a chaotic attractor if:
\begin{enumerate}[1.]
\item
the attractor is indecomposable (i.e., if $\emptyset\neq \mathbb{A}'\subseteq \mathbb{A}$ is an attractor, then $\mathbb{A}'=\mathbb{A}$);
\item
the system is sensitive to initial conditions when restricted to $\mathbb{A}$ (defined in the following definition \ref{def_sensitive}).
\end{enumerate}


\begin{definition}
    A map (a continuous-time system is defined similarly) has sensitive dependence on initial conditions when restricted to its invariant set $\mathbb{A}$, if there exists $r$, for any $p_0 \in \mathbb{A}$, and $\delta>0$, there exists $p'_0 \in \mathbb{A}$: $|p'_0-p_0|<\delta$, and an iterate $k>0$ such that
    \begin{align}
        |f^k(p'_0)-f^k(p_0)|\geqslant r.
    \end{align}\label{def_sensitive}
\end{definition}

Attractor $\mathbb{A}_L$ has already been taken as the smallest attracting set, so it is an indecomposable attractor by default.
We just need to prove that the system has sensitive dependence on initial conditions when restricted to $\mathbb{A}_L$.

We first prove that attractor $\mathbb{A}_P$ of the Poincar\'e map is chaotic using the fact that the doubling map \cite{robinson} is sensitively dependent upon initial conditions when restricted to its attractor.
Then we prove in exactly the same way that attractor $\mathbb{A}_L$ of the model system is chaotic by the sensitivity of $\mathbb{A}_P$.

\begin{proposition}[Sensitive Dependence of the Poincar\'e Map]
The Poincar\'e map of the model system has sensitive dependence upon initial conditions when restricted to its attractor $\mathbb{A}_P$.
\end{proposition}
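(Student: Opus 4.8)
The plan is to exploit a feature already visible in the definition of the Poincar\'e map: its $y$-component evolves autonomously. Indeed the update $y_{n+1}=2y_n-1$ (for $y_n\geqslant 0$) and $y_{n+1}=2y_n+1$ (for $y_n<0$) depends only on $y_n$, never on $x_n$, whereas the $x$-direction is strictly contractive and so can never produce separation on its own. Sensitivity must therefore originate in $y$, and I would reduce the whole problem to the one-dimensional $y$-dynamics and then lift the conclusion back to the plane. First I would verify that the $y$-map is conjugate to the doubling map through the affine change of variable $u=(y+1)/2$, which carries $[-1,1]$ onto $[0,1]$ and the branch boundary $y=0$ onto $u=1/2$; a direct substitution turns the two branches into $u_{n+1}=2u_n$ on $[0,1/2)$ and $u_{n+1}=2u_n-1$ on $[1/2,1]$, that is $u\mapsto 2u \bmod 1$.

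With the conjugacy in hand I would invoke the sensitive dependence of the doubling map, which the excerpt takes as given: there is a fixed constant $r_0>0$ such that for every $u_0\in[0,1]$ and every $\delta>0$ one can find $u_0'$ with $|u_0'-u_0|<\delta$ together with an iterate $n$ for which $|u_n'-u_n|\geqslant r_0$. The standard binary-shift argument supplies this, since choosing $u_0'$ to agree with $u_0$ in its leading binary digits but to differ thereafter forces the images to separate once the shift exposes the differing digit. Translating back through $u=(y+1)/2$, which merely rescales distances by a factor of $2$, yields the same statement for the autonomous $y$-dynamics with separation constant $r=2r_0$.

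Finally I would transfer the conclusion to the full Poincar\'e map $f$ on $\mathbb{A}_P=\mathbb{C}\times[-1,1]$. Given $p_0=(x_0,y_0)\in\mathbb{A}_P$ and $\delta>0$, I would freeze the first coordinate and perturb only the second, setting $p_0'=(x_0,y_0')$ with $y_0'$ the point produced above. Because $x_0\in\mathbb{C}$ and $y_0'\in[-1,1]$, the perturbed point still lies in $\mathbb{A}_P$, and $|p_0'-p_0|=|y_0'-y_0|<\delta$. Since the $y$-coordinate of $f^n$ is governed solely by the doubling dynamics irrespective of $x$, we obtain $|f^n(p_0')-f^n(p_0)|\geqslant|y_n'-y_n|\geqslant r$, which is precisely the sensitivity demanded by Definition \ref{def_sensitive}.

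The one point deserving genuine care is the discontinuity of the map at the branch boundary $y=0$ (equivalently $u=1/2$): orbits of dyadic values of $u$ eventually land on this boundary, where the conjugacy and the doubling-map picture become ambiguous. Since such exceptional initial conditions form only a countable set, every $\delta$-neighborhood of $y_0$ still contains perturbations $y_0'$ whose entire orbit avoids the boundary, so the separation estimate goes through. I expect this to be the sole obstacle requiring attention, the remainder being the routine binary-expansion estimate for the doubling map.
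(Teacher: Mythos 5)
Your proposal is correct and follows essentially the same route as the paper: perturb only the $y$-coordinate, observe that the $y$-dynamics is autonomous and (up to an affine conjugacy) the doubling map, invoke the doubling map's known sensitivity, and lift the separation back via $\|p_n-p'_n\|\geqslant|y_n-y'_n|$. Your treatment is in fact somewhat more careful than the paper's, which glosses over the conjugacy and the branch-boundary issue that you explicitly address.
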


\begin{proof}
For any $p_0\in \mathbb{A}_P$, with its neighboring initial point $p'_0\in \mathbb{A}_P$, we take $p'_0$ as $(x'_0,y'_0)=(x'_0,y'_0-\delta\cdot Sign(y_0))$.

Clearly, $\|p_n-p'_n\|\geqslant|y_n-y'_n|$.
So, proving sensitivity to initial conditions of $\mathbb{A}_P$ is equivalent to that of the Doubling Map:
\begin{align}
    y_{n+1}=\lfloor2y_n\rfloor,\quad (y_i\in(0,1),\forall i).
\end{align}
With the sensitive dependence upon initial conditions of doubling map when restricted to its attractor proved \cite{robinson},
Poincar\'e map of the model system is also sensitive when restricted to its attractor $\mathbb{A}_P$.
\end{proof}

In exactly the same way, the model system can thus be proved sensitively dependent upon initial conditions when restricted to its attractor $\mathbb{A}_L$.
Hence, it is a chaotic attractor by definition.


%
%
%

We also calculate the commonly used indicator of chaos: Lyapunov exponents \cite{robinson} for the model system at fixed points.
By solving the Lyapunov exponents in each coordinate direction, we find that in region $R_A$: $\ell_x=0$, $\ell_y=1$, and $\ell_z=-1$.
It can be found that there is a positive Lyapunov exponent $\ell_y=1$ denoting exponential expansion in the $y$ direction.
In region $R_B$ and $R_C$, $\ell_x=\ell_y=\ell_z=0$, which means that the expansion effect causing the sensitivity of the system is mainly exerted in region $R_A$.

From this and the former section, we find that the attractor of the model system is a chaotic attractor with fractal dimension: a strange chaotic attractor \cite{NonStrangeChaotic}.

%

\section{Construction of Potential Function in the Chaotic System}
Based on the above observation, we want to start constructing a potential function to describe the overall structure of the chaotic system.
First, we construct a ``seed function", $\mathcal{F}$, to account for the ``strangeness" of the system's attractor.
Then we prove its continuous differentiability so that it can be applied in the construction of potential function for the model system.
Later, we explicitly express the potential function in terms of the seed function $\mathcal{F}$.

\subsection{Definition of the ``Seed Function" $\mathcal{F}$}
\begin{definition}[Function $\mathcal{F}$]
\label{def:F}

Let
\begin{equation}
  f_1(x)=
  \left\{
    \begin{array}{l}
     0 \, ,\quad x\in\left[0\, , \;2/3\right]\bigcup\left[4/3\, , \;2\right]\\
     1-\cos(3\pi x) \, ,\quad x\in\left(2/3\, , \;4/3\right);
     \end{array}
    \right.\label{f1}
\end{equation}
and
\begin{equation}
  f_{n+1}(x)=
  \left\{
    \begin{array}{l}
     1/9\times f_n(3x) \, ,\quad x\in\left[0\, , \;2/3\right]\\
     0 \, ,\quad x\in\left(2/3\, , \;4/3\right)\\
     1/9\times f_n(3x-4) \, ,\quad x\in\left[4/3\, , \;2\right].\\
     \end{array}
    \right.\label{fn}
\end{equation}

Thus, we define the function $\mathcal{F}(x)$ as:
\begin{align}
    \mathcal{F}(x)=\sum_{n=1}^\infty f_n(x).
\end{align}
\end{definition}

Function $\mathcal{F}(x)$ defined on $[0,2]$ is shown in Figure (\ref{fig:mother_function}).
It has a fractal structure as the attractor $\mathbb{A}_P$ of the Poincar\'e map.

%

\begin{figure}
\begin{center}
  \includegraphics[width=0.45\textwidth]{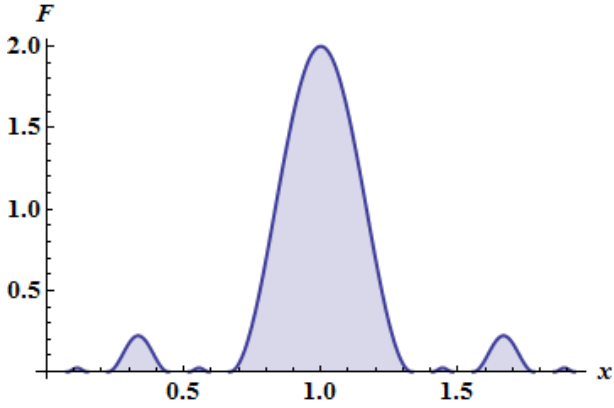}
\end{center}
\caption{
\begin{flushleft}
\textbf{Figure 5} $|$ \textbf{Illustration of the Self-Similar Function: $\mathcal{F}$.}
We hereby plot function $\mathcal{F}$ to intuitively visualize it.
With $\mathcal{F}$, construction of a potential function in the chaotic system would be natural.
\end{flushleft}
}
\label{fig:mother_function}
\end{figure}

\subsection{Proof of $\mathcal{F}(x)$ as Continuous Differentiable}

Here, we propose that the function $\mathcal{F}(x)$ defined above is continuously differentiable.

\begin{proposition}[Continuous Differentiability of Function $\mathcal{F}$]
\label{prop:F}
Function $\mathcal{F}(x)=\sum_{n=1}^\infty f_n(x)$ defined in Definition \ref{def:F} is continuously differentiable.
\end{proposition}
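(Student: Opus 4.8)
The plan is to reduce the claim to the standard term-by-term differentiation theorem: if each $f_n$ is $C^1$ on $[0,2]$, the series $\sum_{n=1}^\infty f_n$ converges at one point, and the formally differentiated series $\sum_{n=1}^\infty f_n'$ converges uniformly, then $\mathcal{F}=\sum_n f_n$ is differentiable with $\mathcal{F}'=\sum_n f_n'$; since each $f_n'$ is continuous and the convergence is uniform, $\mathcal{F}'$ is continuous, so $\mathcal{F}\in C^1$. Thus the whole proof amounts to establishing the $C^1$-regularity of each $f_n$ together with two uniform bounds.

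The first step is an induction showing that every $f_n$ is $C^1$ and, crucially, that it satisfies the invariant $f_n(0)=f_n(2)=0$ and $f_n'(0)=f_n'(2)=0$. For the base case I would verify that $f_1$ is continuous at the junctions $x=2/3$ and $x=4/3$ (there $1-\cos(3\pi x)$ vanishes because $\cos(2\pi)=\cos(4\pi)=1$) and that $f_1'(x)=3\pi\sin(3\pi x)$ also vanishes there (since $\sin(2\pi)=\sin(4\pi)=0$), so $f_1$ is $C^1$ with vanishing endpoint data. For the inductive step, $f_{n+1}$ is assembled from two copies of $f_n$ compressed horizontally by $3$ and vertically by $9$, placed on $[0,2/3]$ and $[4/3,2]$, with a flat zero piece on $(2/3,4/3)$. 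The only places $C^1$-gluing could fail are the junctions $x\in\{2/3,4/3\}$, where a compressed copy abuts the flat piece; smooth matching holds precisely because the inductive hypothesis forces $f_n$ and $f_n'$ to vanish at the relevant images ($3x=2$ on the left, $3x-4=0$ on the right). Propagating the endpoint invariant to $f_{n+1}$ is immediate from the recursion, since $f_{n+1}(0)=\tfrac{1}{9}f_n(0)=0$, $f_{n+1}(2)=\tfrac{1}{9}f_n(2)=0$, and likewise for the derivatives.

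The second step supplies the uniform bounds for the Weierstrass M-test. Directly from the recursion, $\|f_{n+1}\|_\infty=\tfrac{1}{9}\|f_n\|_\infty$, so $\|f_n\|_\infty=2\cdot 9^{-(n-1)}$ and $\sum_n\|f_n\|_\infty<\infty$. Differentiating the recursion via the chain rule yields $f_{n+1}'(x)=\tfrac{1}{3}f_n'(3x)$ on each active piece, so $\|f_{n+1}'\|_\infty=\tfrac{1}{3}\|f_n'\|_\infty$ and $\|f_n'\|_\infty=3\pi\cdot 3^{-(n-1)}$, giving $\sum_n\|f_n'\|_\infty<\infty$ as well. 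The M-test then delivers uniform convergence of both $\sum_n f_n$ and $\sum_n f_n'$, and the differentiation theorem finishes the argument.

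The point that most deserves care is the gap between the two decay rates: differentiation transfers the horizontal compression factor $3$ into the derivatives, so $\sum_n f_n'$ decays only at ratio $1/3$, far slower than the ratio $1/9$ governing $\sum_n f_n$. It remains summable, but this is exactly where one must justify moving the derivative inside the infinite sum. The vanishing-endpoint invariant is the single structural fact that does double duty here — it guarantees the $C^1$-gluing at the self-similar junctions in the inductive step and simultaneously keeps the differentiated series well-behaved there — so I would emphasize it as the heart of the proof; the remaining estimates are routine geometric-series computations.
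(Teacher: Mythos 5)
Your proof follows essentially the same route as the paper's: compute $f_n'$ from the recursion, bound $\|f_n\|_\infty$ by $2\cdot 9^{-(n-1)}$ and $\|f_n'\|_\infty$ by $3\pi\cdot 3^{-(n-1)}$, and invoke uniform convergence of the differentiated series to conclude $\mathcal{F}\in C^1$. Your explicit inductive verification of the vanishing-endpoint invariant at the junctions $x=2/3,\,4/3$ is a point the paper asserts without detail (it merely states that each $f_n'$ is continuous), so your write-up is if anything slightly more careful on the $C^1$-gluing, but the argument is the same.
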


\begin{proof}
We first calculate the derivative of every $f_{n}(x)$ on $[0,2]$. Then we bound $f_{n}(x)$ and $|f'_{n}(x)|$ by geometric series to prove uniform convergence of their sums, and hence prove that $\mathcal{F}(x)=\sum_{n=1}^\infty f_n(x)$ is continuously differentiable.

From the definition (in equation (\ref{f1}) and (\ref{fn})):
\begin{equation}
  f_1(x)=
  \left\{
    \begin{array}{l}
     0 \, ,\quad x\in\left[0\, , \;2/3\right]\bigcup\left[4/3\, , \;2\right]\\
     1-\cos(3\pi x) \, ,\quad x\in\left(2/3\, , \;4/3\right),
     \end{array}
    \right.\nonumber
\end{equation}

and
\begin{equation}
  f_{n+1}(x)=
  \left\{
    \begin{array}{l}
     1/9\times f_n(3x) \, ,\quad x\in\left[0\, , \;2/3\right]\\
     0 \, ,\quad x\in\left(2/3\, , \;4/3\right)\\
     1/9\times f_n(3x-4) \, ,\quad x\in\left[4/3\, , \;2\right],\\
     \end{array}
    \right.\nonumber
\end{equation}

we have by taking derivative on both sides:
\begin{equation}
    f_1'(x)=
    \left\{\
    \begin{array}{l}
        0 \, , \quad x\in\left[0\, , \;2/3\right]\bigcup\left[4/3\, , \;2\right]\\
        3\pi\sin(3\pi x) \, , \quad x\in\left(2/3\, , \;4/3\right),
    \end{array}
    \right.\nonumber
\end{equation}

and
\begin{equation}
  f'_{n+1}(x)=
  \left\{
    \begin{array}{l}
     1/3\times f'_n(3x) \, ,\quad x\in\left[0\, , \;2/3\right]\\
     0 \, ,\quad x\in\left(2/3\, , \;4/3\right)\\
     1/3\times f'_n(3x-4) \, ,\quad x\in\left[4/3\, , \;2\right].\\
     \end{array}
    \right.\nonumber
\end{equation}

Obviously, $f_1(x)\leqslant 2$ and $|f'_1(x)|\leqslant 3\pi$. And also, $f'_n(x)$ is continuous for any $x\in[0,2]$.

If we further denote the set in which $f_n(x)$ is nonzero as $\mathbb{C}_n$,
we have:
\begin{align}
    \mathbb{C}_1=\left(\frac{2}{3},\frac{4}{3}\right),\nonumber
\end{align}
and
\begin{align}
    \mathbb{C}_{n+1}=\frac{\mathbb{C}_n}{3}\bigcup\frac{\mathbb{C}_n+4}{3}.
\end{align}
Since $\mathbb{C}_{n}\subset[0,2]$,
\begin{align}
    \frac{\mathbb{C}_n}{3}\bigcap\frac{\mathbb{C}_n+4}{3} \subset\left[0,\frac{2}{3}\right]\bigcap\left[\frac{4}{3},2\right]=\emptyset.\nonumber
\end{align}

We thus can conclude that:
\begin{align}
    f_{n+1}(x)\leqslant \frac{1}{9}f_{n}(x)\leqslant9^{-n}f_1(x)\leqslant2 \times 9^{-n},\nonumber
\end{align}
and
\begin{align}
    |f'_{n+1}(x)|\leqslant \frac{1}{3}|f'_{n}(x)|\leqslant3^{-n}|f'_1(x)|\leqslant3\pi \times 3^{-n},\nonumber
\end{align}
for any $x\in[0,2]$.

At this point, we give an upper bound for the series $\sum_{n=1}^m f_n(x)$ and its derivative (Although the least upper bound is even smaller if we note that $\mathbb{C}_n\bigcap\mathbb{C}_{n+1}$ is actually empty, an upper bound is good enough):
\begin{align}
    \sum_{n=1}^m f_n(x)\leqslant\frac{9}{4}-\frac{1}{4} \times 9^{-m},\nonumber
\end{align}
and that
\begin{align}
    \sum_{n=1}^m |f'_n(x)|\leqslant\frac{9\pi}{2}-\frac{3\pi}{2} \times 3^{-m}.\nonumber
\end{align}

So, $\sum_{n=1}^\infty f'_n(x)$ is uniformly absolutely-convergent. Hence, $\sum_{n=1}^\infty f'_n(x)$ and $\sum_{n=1}^\infty f_n(x)$ are all uniformly convergent.

With every $f_n(x)$ continuously differentiable, $\mathcal{F}(x)=\sum_{n=1}^\infty f_n(x)$ is continuously differentiable:
\begin{align}
    \frac{d}{dx}\mathcal{F}(x)=\sum_{n=1}^\infty f'_n(x).
\end{align}
\end{proof}

Clearly, the points where $\mathcal{F}(x)=\mathcal{F}'(x)=0$ form a Cantor set $\mathbb{C}$ corresponding to the attractor $\mathbb{A}_L$ of the model system ($\mathcal{F}(x)=0$ if and only if $x\in\mathbb{C}$).
At this point, we found that the potential function can be constructed in the following section.

\subsection{Constructing Potential Function in the Chaotic System}
As in equation (\ref{regionB}) we use $\theta$ to denote the angle that $(y,z)$ form with respect to the center $(2,2)$:
\begin{align}
    \theta=\arccos{\frac{y-2}{\sqrt{(z-2)^2+(y-2)^2}}}.\nonumber
\end{align}
Then, we construct potential function in each region respectively:

\begin{enumerate}[1.]
\item In the right part of region $R_A$, where $x,y,z\in[0,2]$, $yz\in[-2,2]$:
\begin{equation}
    \Psi_{A}=\left(\frac{4}{9\pi}\theta+\frac{5}{9}\right)\mathcal{F}(x).\label{PhiA}
\end{equation}

\item In region $R_B$, where $x\in[0,2]$, $y\in(2,4]$, $z\in[0,2]$, $\sqrt{(z-2)^2+(y-2)^2}\in[1,2]$:
\begin{equation}
    \Psi_{B}=\left(\frac{4}{9\pi}\theta+\frac{5}{9}\right) \mathcal{F}\left(\left(\frac{4}{3\pi}\theta+\frac{1}{3}\right)^{-1} x\right).\label{PhiB}
\end{equation}

\item In region $R_C$, where $x\in[0,2/3]$, $y\in[-2,4]$, $z>2$, $\sqrt{(z-2)^2+(y-2)^2}\geqslant 1$, $\sqrt{(z-2)^2+(y-3/2)^2}\leqslant 5/2$:
    \begin{equation}
    \Psi_{C}=\left(-\frac{4}{9\pi}\theta+\frac{5}{9}\right) \mathcal{F}(3x).\label{PhiC}
\end{equation}
\end{enumerate}

Here, we plot the potential function taken on the Poincar\'e section in Figure (\ref{fig:PotentialPoincare}).

\begin{figure}
\begin{center}
  \includegraphics[width=0.45\textwidth]{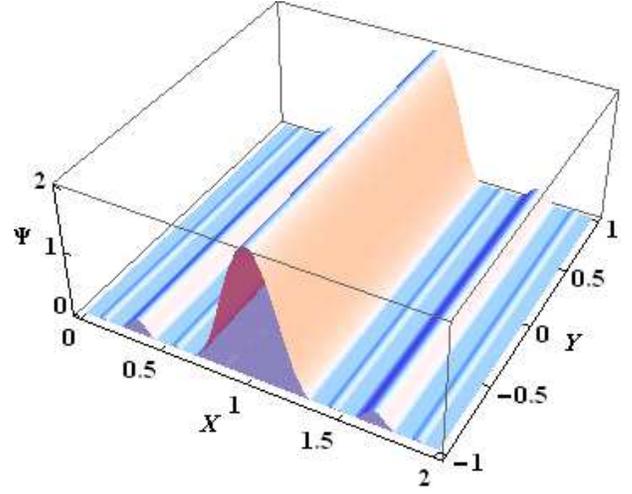}
\end{center}
\caption{
\begin{flushleft}
\textbf{Figure 6} $|$ \textbf{Potential Function on Poincar\'e Section.}\\
On the Poincar\'e section, the potential function is demonstrated to be a fractal object: it is zero when point $(x,y,2)$ belongs to the attractor $\mathbb{A}_L$, i.e., $\Psi|_{z=2}=0$ when $x \in \mathbb{C}$.
And when $(x,y,2)$ does not belong to the attractor, the potential function $\Psi$ has self-similar structure.
\end{flushleft}
}
\label{fig:PotentialPoincare}
\end{figure}

If we are also interested in the dynamics near saddle-focus fixed points, potential function in region $R_{D^A}$, $R_{D^B}$, and $R_{D^C}$ can also be constructed as follows.
\begin{enumerate}[1.]
\item In region $R_{D^A}$, where $x\in[0,2]$, $y\in[1,2]$, $z\in[1,2]$, $yz\in[2,4]$:
\begin{equation}
    \Psi_{D^A}=\left(\frac{4}{9\pi}\theta+\frac{5}{9}\right)\mathcal{F}(x)+1-\left(\dfrac{1}{2}yz-2\right)^2.\label{PhiDA}
\end{equation}

\item In region $R_{D^B}$, where $x\in[0 \, , \; 2/3+8/(3\pi)\times \theta]$, $y\in[2,3]$, $z\in[1,2]$, $\sqrt{(z-2)^2+(y-2)^2}\in[0,1]$:
\begin{align}
    \Psi_{D^B}=&\left(\frac{4}{9\pi}\theta+\frac{5}{9}\right) \mathcal{F}\left(\left(\frac{4}{3\pi}\theta+\frac{1}{3}\right)^{-1} x\right)\\\nonumber
    &+1-(z-2)^2-(y-2)^2.\label{PhiDB}
\end{align}

\item In region $R_{D^A}$, where $x\in[0, \, 2/3]$, $y\in[1,3]$, $z>2$, $\sqrt{(z-2)^2+(y-2)^2}\leqslant 1$:
    \begin{align}
    \Psi_{D^C}=&\left(-\frac{4}{9\pi}\theta+\frac{5}{9}\right) \mathcal{F}(3x)\\\nonumber
    &+1-(z-2)^2-(y-2)^2.\label{PhiDC}
    \end{align}
\end{enumerate}
Potential function in region $R_D$ is gradually higher in the center of the region than in its boundary with other regions.
Hence, points in this region will naturally converge to its boundary (at: $x=0, \, \sqrt{(z-2)^2+(y-2)^2}=1$, when $y>2$ or $z>2$; and at: $x=0, \, yz=2$, when $y, z \in [1,2]$).

The construction of the potential function is not unique.
If we change the expression of $f_1(x)$ in the definition of the ``seed function" $\mathcal{F}(x)$, we can have a different potential function for the system.

\section{Verification of the potential function}
Here, in this section, we verify the integrity of the potential function from three angles:
First, we show that it is continuous in the domain.
Second, we demonstrate that it decreases monotonically along the vector field.
Third, we show that $\nabla\Psi(x)=0$ if and only if $x$ belongs to the attractor of the system.

\subsection{Continuity of the Potential Function}

\begin{enumerate}[1.]
\item
At the boundary of region $R_A$ and $R_B$, $y=2$, $x\in[0,2]$, $z\in[0,1]$.

Hence, in equation (\ref{PhiA}), $\theta=\arccos{0}=\pi/2$;
$\left(4/(3\pi) \times \theta+1/3\right)^{-1}=1$;
and $\left(4/(9\pi) \times \theta+5/9\right)=7/9$.
So,
\begin{align}
    \Psi_A|_{y=2^-}=\frac{7}{9}\mathcal{F}(x)=\Psi_{B_1}|_{y=2^+}.\nonumber
\end{align}

\item
At the boundary of region $R_B$ and $R_C$, $z=2$, $x\in[0,2/3]$, and $y\in[3,4]$.

Hence, in equation (\ref{PhiB}), $\theta=\arccos{1}=0$;
$\left(4/(3\pi) \times \theta+1/3\right)^{-1}=3$;
and $\left(\pm4/(9\pi) \times \theta+5/9\right)=5/9$.
So,
\begin{align}
    \Psi_{B}|_{z=2^-}=\frac{5}{9}\mathcal{F}(3x)=\Psi_{C_1}|_{z=2^+}.\nonumber
\end{align}

\item
At the boundary of region $R_C$ and $R_A$, $z=2$, $x\in[0,2/3]$, and $y\in[-1,1]$.

Hence, in equation (\ref{PhiC}), $\theta=\arccos(-1)=\pi$;
$\left(4/(9\pi) \times \theta+5/9\right)=1$;
and $\left(-4/(9\pi) \times \theta+5/9\right)=1/9$.
So,
\begin{align}
    &\Psi_{A}|_{z=2^-}=\mathcal{F}(x)\nonumber\\
    &\Psi_{C}|_{z=2^+}=\frac{1}{9}\mathcal{F}(3x).\nonumber
\end{align}

It can be observed that in the definition of $\mathcal{F}(x)=\sum_{n=1}^\infty f_n(x)$,
\begin{align}
    f_{n+1}(x)=\frac{1}{9} f_n(3x) \, ,\quad x\in\left[0\, , \;2/3\right].\nonumber
\end{align}
%

Clearly, because $f_1(x)=0$ in $[0,2/3]$,
\begin{align}
    \frac{1}{9}\mathcal{F}(3x)=\sum_{n=2}^\infty f_n(x)=\mathcal{F}(x),
\end{align}
for $x\in[0,2/3]$.

\end{enumerate}

Please note that this is a critical point in the construction of potential function in cases of self-similar attractors. If the potential function constructed is not self-similar accordingly, then the boundary would not totally fit.

\subsection{Monotonic Decreasing of the Potential Function}
Then we take Lie derivative (derivative along the vector field) \cite{Arnold83Geo} of the potential function along vector fields in each region
remembering that $\mathcal{F}(x)\geqslant0$ for any $x\in[0,2]$.
\begin{enumerate}[1.]
\item In the right part of region $R_A$, where $x,y,z\in[0,2]$, $yz\in[-2,2]$:
\begin{equation}
\left\{
    \begin{array}{l}
     \dot x=0\\
     \dot \theta=Sgn(z-2)\\
     \left(-\dfrac{z-2}{(z-2)^2+(y-2)^2}  \dot{y}
     +\dfrac{y-2}{(z-2)^2+(y-2)^2}  \dot{z}\right)\\
     =-\dfrac{(2-y)z+(2-z)y}{(z-2)^2+(y-2)^2}.
     \end{array}
\right.\nonumber
\end{equation}
Since $\dot x=0$ and $y,z\in[0,2]$,
\begin{align}
    \dot\Psi_A&=\frac{4}{9\pi}\mathcal{F}(x)  \dot\theta\nonumber\\
    &=-\frac{4}{9\pi}\frac{(2-y)z+(2-z)y}{(z-2)^2+(y-2)^2} \mathcal{F}(x)\leqslant0.
\end{align}

\item In region $R_B$, where $x\in[0,2]$, $y\in(2,4]$, $z\in[0,2]$, $\sqrt{(z-2)^2+(y-2)^2}\in[1,2]$:
\begin{equation}
\left\{
    \begin{array}{l}
     \dot x=-x \left(\dfrac{\pi}{4}+\arccos{\dfrac{y-2}{\sqrt{(z-2)^2+(y-2)^2}}}\right)^{-1}\\
     =-x \left(\dfrac{\pi}{4}+\theta\right)^{-1}\\
     \dot\theta=\dfrac{z-2}{(z-2)^2+(y-2)^2}  \dot{y}
     -\dfrac{y-2}{(z-2)^2+(y-2)^2}  \dot{z}\\
     =-1.
     \end{array}
\right.\nonumber
\end{equation}

\begin{align}
    \dot\Psi_{B}&=\frac{4}{9\pi}
    \mathcal{F}\left(\left(\frac{4}{3\pi}\theta+\frac{1}{3}\right)^{-1}  x\right) \dot\theta\nonumber\\
    &+\left(\dfrac{4}{9\pi}\theta+\dfrac{5}{9}\right)\left(\dfrac{4}{3\pi}\theta+\dfrac{1}{3}\right)^{-2}
    \mathcal{F}'\left(\left(\frac{4}{3\pi}\theta+\frac{1}{3}\right)^{-1}  x\right)\nonumber\\
    &\left(\left(\frac{4}{3\pi}\theta+\frac{1}{3}\right) \dot x -\frac{4}{3\pi}x \dot\theta\right)\nonumber\\
    &=-\frac{4}{9\pi}
    \mathcal{F}\left(\left(\frac{4}{3\pi}\theta+\frac{1}{3}\right)^{-1}  x\right)\nonumber\\
    &+\left(\dfrac{4}{9\pi}\theta+\dfrac{5}{9}\right)\left(\dfrac{4}{3\pi}\theta+\dfrac{1}{3}\right)^{-2}
    \mathcal{F}'\left(\left(\frac{4}{3\pi}\theta+\frac{1}{3}\right)^{-1}  x\right)\nonumber\\
    &\left(-\left(\frac{4}{3\pi}\theta+\frac{1}{3}\right) \left(\frac{\pi}{4}+\theta\right)^{-1}  x +\frac{4}{3\pi}x\right).\nonumber
\end{align}

Since $\left(-\left(\dfrac{4}{3\pi}\theta+\dfrac{1}{3}\right) \left(\dfrac{\pi}{4}+\theta\right)^{-1}  x +\dfrac{4}{3\pi}x\right)=0$,
\begin{align}
    \dot\Psi_{B}=-\frac{4}{9\pi}\mathcal{F}\left(\left(\frac{4}{3\pi}\theta+\frac{1}{3}\right)^{-1}  x\right)\leqslant0.
\end{align}

\item In region $R_C$, where $x\in[0,2/3]$, $y\in[-1,4]$, $z>2$, $\sqrt{(z-2)^2+(y-2)^2}\geqslant 1$, $\sqrt{(z-2)^2+(y-3/2)^2}\leqslant 5/2$:
    \begin{equation}
    \left\{
        \begin{array}{l}
         \dot x=0\\
         \dot\theta=-\dfrac{z-2}{(z-2)^2+(y-2)^2} \dot{y}
         +\dfrac{y-2}{(z-2)^2+(y-2)^2} \dot{z}\\
         =\dfrac{(z-2)^2}{(z-2)^2+(y-2)^2}+\dfrac{y-2}{(z-2)^2+(y-2)^2}\\
         \left(\dfrac{9y}{8}- \dfrac{21}{8}+\dfrac{\sqrt{(3y-7)^2+8(z-2)^2}}{8}\right)>0.
        \end{array}
    \right.\nonumber
\end{equation}

Hence,
\begin{equation}
    \dot\Psi_{C}=-\frac{4}{9\pi} \mathcal{F}(3x) \dot\theta \leqslant0.
\end{equation}
\end{enumerate}

\subsection{Potential Function and the Attractor}
Here, we verify that the potential function attains extremum: $\nabla\Psi(\mathbf{x})=0$ if and only if $\mathbf{x}=(x, y, z)$ belongs to the attractor $\mathbb{A}_L$ of the system.
Again, $\theta=\arccos{(y-2)/\sqrt{(z-2)^2+(y-2)^2}}$.

\begin{enumerate}[1.]
\item In the right part of region $R_A$, where $x,y,z\in[0,2]$, $yz\in[-2,2]$:
\begin{equation}
    \nabla\Psi_A=
    \left(
        \begin{array}{l}
         \left(\dfrac{4}{9\pi}\theta+\dfrac{5}{9}\right)\mathcal{F}'(x)\\
         \dfrac{4}{9\pi} \dfrac{z-2}{(z-2)^2+(y-2)^2}\mathcal{F}(x)\\
         -\dfrac{4}{9\pi} \dfrac{y-2}{(z-2)^2+(y-2)^2}\mathcal{F}(x)
        \end{array}
    \right).\nonumber
\end{equation}

Here, $\mathcal{F}(x)=0$ and $\mathcal{F}'(x)=0$ if and only if $x\in\mathbb{C}$;
and point $(x, y, z)$ belongs to the attractor $\mathbb{A}_L$ if and only if $x\in\mathbb{C}$.
So, $\nabla\Psi_A=0$ if and only if $(x, y, z)\in \mathbb{A}_L$.

\begin{widetext}
\item In region $R_B$, where $x\in[0,2]$, $y\in(2,4]$, $z\in[0,2]$, $\sqrt{(z-2)^2+(y-2)^2}\in[1,2]$:
%
%
%
%
%
%

\begin{equation}
    \nabla\Psi_{B}=
    \left(
        \begin{array}{l}
         \left(\dfrac{4}{9\pi}\theta+\dfrac{5}{9}\right)\left(\dfrac{4}{3\pi}\theta+\dfrac{1}{3}\right)^{-1}\mathcal{F}'\left(\left(\dfrac{4}{3\pi}\theta+\dfrac{1}{3}\right)^{-1} x\right)\\

         \dfrac{4}{9\pi} \dfrac{z-2}{(z-2)^2+(y-2)^2}
         \left(\mathcal{F}\left(\left(\dfrac{4}{3\pi}\theta+\dfrac{1}{3}\right)^{-1} x\right)

-\left(\dfrac{4}{3\pi}\theta+\dfrac{5}{3}\right)\left(\dfrac{4}{3\pi}\theta+
        \dfrac{1}{3}\right)^{-2}\mathcal{F}'\left(\left(\dfrac{4}{3\pi}\theta+\dfrac{1}{3}\right)^{-1} x\right)\right)\\

         -\dfrac{4}{9\pi} \dfrac{y-2}{(z-2)^2+(y-2)^2}
         \left(\mathcal{F}\left(\left(\dfrac{4}{3\pi}\theta+\dfrac{1}{3}\right)^{-1} x\right)

-\left(\dfrac{4}{3\pi}\theta+\dfrac{5}{3}\right)\left(\dfrac{4}{3\pi}\theta+
\dfrac{1}{3}\right)^{-2}\mathcal{F}'\left(\left(\dfrac{4}{3\pi}\theta+\dfrac{1}{3}\right)^{-1} x\right)\right)
        \end{array}
    \right).\nonumber
\end{equation}

Here, $\mathcal{F}\left(\left(4/(3\pi)\times\theta+1/3\right)^{-1}x\right)=0$
and $\mathcal{F}'\left(\left(4/(3\pi)\times\theta+1/3\right)^{-1}x\right)=0$
if and only if $\left(4/(3\pi)\times\theta+1/3\right)^{-1}x\in\mathbb{C}$;
and point $(x, y, z)$ belongs to the attractor $\mathbb{A}_L$ if and only if
$\left(4/(3\pi)\times\theta+1/3\right)^{-1}x\in\mathbb{C}$.
So, $\nabla\Psi_{B}=0$ if and only if $(x, y, z)\in \mathbb{A}_L$.
\end{widetext}

\item In region $C$, where $x\in[0,2/3]$, $y\in[-1,4]$, $z>2$, $\sqrt{(z-2)^2+(y-2)^2}\geqslant 1$, $\sqrt{(z-2)^2+(y-3/2)^2}\leqslant 5/2$:
    \begin{equation}
    \nabla\Psi_{C}=
    \left(
        \begin{array}{l}
         \left(-\dfrac{4}{3\pi}\theta+\dfrac{5}{3}\right)\mathcal{F}'(3x)\\
         \dfrac{4}{9\pi} \dfrac{z-2}{(z-2)^2+(y-2)^2}\mathcal{F}(3x)\\
         -\dfrac{4}{9\pi} \dfrac{y-2}{(z-2)^2+(y-2)^2}\mathcal{F}(3x)
        \end{array}
    \right).\nonumber
\end{equation}

Here, $\mathcal{F}(3x)=0$ and $\mathcal{F}'(3x)=0$ if and only if $x\in\mathbb{C}$;
and point $(x, y, z)$ belongs to the attractor $\mathbb{A}_L$ if and only if $x\in\mathbb{C}$.
So, $\nabla\Psi_{C}=0$ if and only if $(x, y, z)\in \mathbb{A}_L$.

\end{enumerate}

In exactly the same way, can we also show the integrity of the potential function in region $R_D$.

\section{Comparison with Related Works}
As discussed in the introduction, constructing a potential-like function in the chaotic system is actually an effort that is by no means totally strange to researchers.
Until recently, there are various efforts seeking to describe chaotic dynamics using generalized Hamiltonian approach \cite{Sira}, energy-like function technique \cite{Sarasola}, minimum action method \cite{Weinan}, and etc.
These previous methods all construct a potential-like scaler function to analyze certain chaotic system.
Unfortunately, the scalar functions in these works all lack certain important properties.

For example, the generalized Hamiltonian systems approach takes a quadratic form of the state variables as the ``generalized Hamiltonian" \cite{Sira}, a Hamiltonian that includes conserved dynamics, energy dissipation, and energy input.
The third part transforms an autonomous differential equation into a non-autonomous physical model, attempting to explain for the ``irregular" \cite{Lorenz} behavior of chaotic systems.
Contrary to this expectation,
when the Hamiltonian is set possible to dissipate and increase, the generalized Hamiltonian itself becomes a chaotic oscillating signal with respect to time.
Therefore, it remains an issue as to what additional insight this generalized Hamiltonian can provide about the original system, such as global stability or local performance.

The energy-like function technique is essentially similar to the generalized Hamiltonian approach.
Its energy-like function differs from the generalized Hamiltonian in a way that it may not be a quadratic form of the state variables.
Rather, the energy-like function is constructed based on the ``geometric appearance" \cite{Sarasola} of the attractor corresponding to the specific chaotic system.
Although this technique would seem more sophisticated, its energy-like function still oscillate chaotically along with time, describing chaotic dynamics in a chaotic fashion.
Loss of monotonicity restricts the function from describing the system's essential properties like stability and performance.

The minimum action method, however, cast the problem under the light of zero noise limit.
By constructing an auxiliary Hamiltonian \cite{WentzellRecent} (commonly denoted as ``Freidlin-Wentzell Hamiltonian"), Freidlin-Wentzell action functional can be minimized \cite{Wentzell}.
This method analyzes chaotic system by possible transitions between limit sets \cite{Weinan}.
But since the Freidlin-Wentzell Hamiltonian can be not bounded even in globally stable systems, it is not a quantitative measure comparable between points in state space, hence, not an ideal potential function.

In short, all the previous works each focuses on one attribute of the potential function.
However, as we can see from our constructive result, only when all the requirements (in definition $1$) are met, would the potential function reflect evolution of the whole system and structure of the chaotic attractor.
In this sense, the current work is the first construction to satisfy such strong conditions, providing a both detailed and global description for a chaotic system.

\section{Chaotic attractor and strange attractor}

\begin{figure}
\begin{center}
    \includegraphics[width=0.5\textwidth]{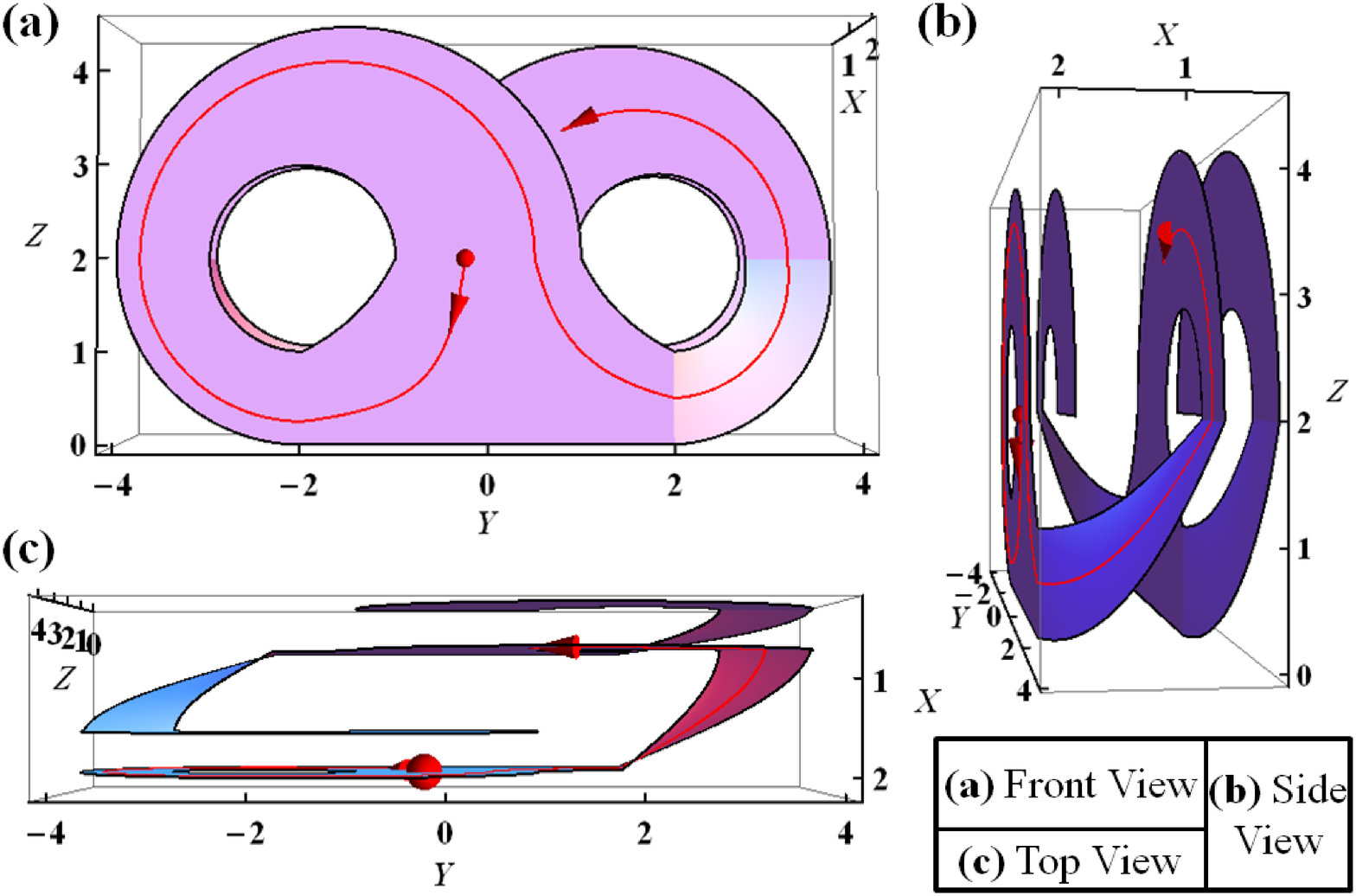}
\end{center}
\caption{
\begin{flushleft}
\textbf{Figure 7} $|$ \textbf{Strange Chaotic Attractor.}\\
We find that a connected surface of the attractor is of infinite layers.
We show how surface $x=2$ is linked to the other equipotential layers.
It can be seen that the surface of the attractor is orientable.
We hereby demonstrate the strange chaotic attractor viewed from (a), front; (b), side; and (c), top.
The trajectory running from point (2, 1/4, 2) is also shown in the figure.
\end{flushleft}
}
\label{fig:Strange}
\end{figure}
With the potential function constructed, we can easily solve the system's attractor without any need of numerical simulation.
We find that the attractor is composed of connected surfaces, each of infinite layers.
Starting from the plane $x=2$, we show the configuration of these layers in Figure (\ref{fig:Strange}).
Since geometric configuration of chaotic attractor interests many researchers \cite{Gilmore}, we demonstrate in the figure that the chaotic attractor of the system studied in this paper consists of orientable surfaces.

The chaotic attractor here is a strange attractor of fractal dimension \cite{NonStrangeChaotic}.
And in the literature of dynamical systems, there have long been discussions about the relationship between chaotic attractors and strange attractors \cite{StrangeAndChaotic}.
Several examples of strange nonchaotic attractors and nonstrange chaotic attractors have been found \cite{NonStrangeChaotic}.
Until recently, strange nonchaotic attractors are still studied \cite{StrangeNonChaotic}.

The potential function approach provides a unified framework to treat the topics of chaotic attractors and strange attractors together.
To clarify this insight, we need to apply our decomposition method (equation (\ref{decomp})) here:
\begin{align}
\dot{\mathbf{x}}=\mathbf{f}(\mathbf{x})=-D \nabla\Psi(\mathbf{x})+Q \nabla\Psi(\mathbf{x}).\nonumber
\end{align}
where
\begin{align}
    D=-\frac{\mathbf{f}\cdot\nabla\Psi}{\nabla\Psi\cdot\nabla\Psi}I,\nonumber
\end{align}
and
\begin{align}
    Q=\frac{\mathbf{f}\times\nabla\Psi}{\nabla\Psi\cdot\nabla\Psi}.\nonumber
\end{align}

We first analyze our model system with this decomposition framework.
Then we further modify our model system to two typical cases interesting to many researchers: a nonstrange chaotic attractor and a strange nonchaotic attractor.
After analyzing these two cases, we explain the different origins of chaotic attractors and strange attractors in general.

\subsection{Decomposition of the Chaotic System}
According to our decomposition scheme, we first decompose the chaotic dynamical system in each region into two components: the gradient component and the rotation component.

In region $R_A$, $\nabla\Psi_A$ is solved as:
\begin{equation}
    \nabla\Psi_A=
    \left(
        \begin{array}{l}
         \left(\dfrac{4}{9\pi}\theta+\dfrac{5}{9}\right)\mathcal{F}'(x)\\
         -\dfrac{4}{9\pi} \dfrac{2-z}{(z-2)^2+(y-2)^2}\mathcal{F}(x)\\
         \dfrac{4}{9\pi} \dfrac{2-y}{(z-2)^2+(y-2)^2}\mathcal{F}(x)
        \end{array}
    \right).\nonumber
\end{equation}

Hence, we can find the expression of the matrix $D_A$ accounting for the gradient component of the vector field in region $R_A$:

\begin{align}
    D_A&=-\frac{\mathbf{f}_A\cdot\nabla\Psi_A}{\nabla\Psi_A\cdot\nabla\Psi_A}I \\\nonumber
       &=\dfrac{\dfrac{4}{9\pi} \dfrac{(2-y)z+(2-z)y}{(z-2)^2+(y-2)^2}\mathcal{F}(x)}
       {\left(\dfrac{4}{9\pi}\theta+\dfrac{5}{9}\right)^2\left(\mathcal{F}'(x)\right)^2
       +\dfrac{\left(4/9\pi\right)^2}{(z-2)^2+(y-2)^2}\left(\mathcal{F}(x)\right)^2}I.
\end{align}

\begin{widetext}
The decomposed gradient part would then be:
\begin{equation}
    D_A\nabla\Psi_A=
    \dfrac{\dfrac{4}{9\pi} \dfrac{(2-y)z+(2-z)y}{(z-2)^2+(y-2)^2}\mathcal{F}(x)}
       {\left(\dfrac{4}{9\pi}\theta+\dfrac{5}{9}\right)^2\left(\mathcal{F}'(x)\right)^2
       +\dfrac{\left(4/9\pi\right)^2}{(z-2)^2+(y-2)^2}\left(\mathcal{F}(x)\right)^2}\\
    \left(
        \begin{array}{l}
         \left(\dfrac{4}{9\pi}\theta+\dfrac{5}{9}\right)\mathcal{F}'(x)\\
         -\dfrac{4}{9\pi} \dfrac{2-z}{(z-2)^2+(y-2)^2}\mathcal{F}(x)\\
         \dfrac{4}{9\pi} \dfrac{2-y}{(z-2)^2+(y-2)^2}\mathcal{F}(x)
        \end{array}
    \right).\nonumber
\end{equation}

Also, we can find the decomposed rotation part by finding $Q_A$ as:
\begin{align}
    Q_A&=\frac{\mathbf{f}_A\times\nabla\Psi_A}{\nabla\Psi_A\cdot\nabla\Psi_A} \\\nonumber
       &=\dfrac{1}{{\left(\dfrac{4}{9\pi}\theta+\dfrac{5}{9}\right)^2\left(\mathcal{F}'(x)\right)^2
       +\dfrac{\left(4/9\pi\right)^2}{(z-2)^2+(y-2)^2}\left(\mathcal{F}(x)\right)^2}} \\\nonumber
&\left(
      \begin{array}{ccc}
        0&      -y\left(\dfrac{4}{9\pi}\theta+\dfrac{5}{9}\right)\mathcal{F}'(x)&      z\left(\dfrac{4}{9\pi}\theta+\dfrac{5}{9}\right)\mathcal{F}'(x) \\
        y\left(\dfrac{4}{9\pi}\theta+\dfrac{5}{9}\right)\mathcal{F}'(x)&       0&      -\dfrac{4}{9\pi} \dfrac{(2-z)z-(2-y)y}{(z-2)^2+(y-2)^2}\mathcal{F}(x) \\
        -z\left(\dfrac{4}{9\pi}\theta+\dfrac{5}{9}\right)\mathcal{F}'(x)&       \dfrac{4}{9\pi} \dfrac{(2-z)z-(2-y)y}{(z-2)^2+(y-2)^2}\mathcal{F}(x)&         0
      \end{array}
\right).
\end{align}

The decomposed rotation part would then be:
\begin{align}
    Q_A\nabla\Psi_A=
    &\dfrac{1}
       {\left(\dfrac{4}{9\pi}\theta+\dfrac{5}{9}\right)^2\left(\mathcal{F}'(x)\right)^2
       +\dfrac{\left(4/9\pi\right)^2}{(z-2)^2+(y-2)^2}\left(\mathcal{F}(x)\right)^2} \\\nonumber
    &\left(
        \begin{array}{l}
         -\dfrac{4}{9\pi}\left(\dfrac{4}{9\pi}\theta+\dfrac{5}{9}\right)\dfrac{(y-2)z+(z-2)y}{(z-2)^2+(y-2)^2}\mathcal{F}'(x)\mathcal{F}(x)\\
         y\left(\dfrac{4}{9\pi}\theta+\dfrac{5}{9}\right)^2\left(\mathcal{F}'(x)\right)^2+ \left(\dfrac{4}{9\pi}\right)^2\dfrac{(y-2)^2y-(y-2)(z-2)z}{\left((z-2)^2+(y-2)^2\right)^2}\left(\mathcal{F}(x)\right)^2\\
         -z\left(\dfrac{4}{9\pi}\theta+\dfrac{5}{9}\right)^2\left(\mathcal{F}'(x)\right)^2+ \left(\dfrac{4}{9\pi}\right)^2\dfrac{(y-2)y(z-2)-(z-2)^2z}{\left((z-2)^2+(y-2)^2\right)^2}\left(\mathcal{F}(x)\right)^2
        \end{array}
    \right).\nonumber
\end{align}

\end{widetext}

When the system approaches its attractor, i.e., $\mathcal{F}(x)\rightarrow0$,
\begin{align}
    \dfrac{\mathcal{F}(x)}{\left(\mathcal{F}'(x)\right)^2}
    &=\lim_{x\rightarrow 0}\dfrac{\left(\dfrac{1}{9}\right)^n \left(1-\cos(3\pi x)\right)} {\left(\left(\dfrac{1}{3}\right)^n 3\pi \sin(3\pi x)\right)^2}\\\nonumber
    &=\lim_{x\rightarrow 0}\dfrac{\left(\dfrac{1}{9}\right)^n \dfrac{1}{2}(3\pi x)^2}
    {\left(\left(\dfrac{1}{3}\right)^n 9\pi^2x\right)^2}
    =\dfrac{1}{18\pi^2}.
\end{align}

Thus, when the system converges to its attractor, the gradient matrix $D_A$ would be:
\begin{align}
    D_A&=-\dfrac{\dfrac{4}{9\pi} \dfrac{(y-2)z+(z-2)y}{(z-2)^2+(y-2)^2}}
    {\left(\dfrac{4}{9\pi}\theta+\dfrac{5}{9}\right)^2}
    \frac{\mathcal{F}(x)}{\left(\mathcal{F}'(x)\right)^2}I\\\nonumber
    &=-\dfrac{2}{81\pi^3} \dfrac{\dfrac{(y-2)z+(z-2)y}{(z-2)^2+(y-2)^2}}
    {\left(\dfrac{4}{9\pi}\theta+\dfrac{5}{9}\right)^2}I,
\end{align}
which is finite.

Hence, the gradient component $D_A\nabla\Psi_A$ of the system would converges to zero when approaching the attractor.
So the motion on the attractor is caused totally by the rotation part: $Q_A\nabla\Psi_A$.

In exactly the same way, the decomposition procedure can be carried out in region B and region C, and the same conclusion holds.

\subsection{Nonstrange Chaotic Attractor}
Let's first examine an example of nonstrange chaotic attractor by modifying our original system a little (in region $R_B$, equation (\ref{regionB})):

In region $R_B$, we set
\begin{align}
    \theta=\arccos{\dfrac{y-2}{\sqrt{(z-2)^2+(y-2)^2}}},\nonumber
\end{align}
as in equation (\ref{regionB}).
Then we change the dynamical system in region $R_B$ (defined as $x\in[0 \, , \; 4/\pi \times \theta]$, $y\in[2,4]$, $z\in[0,2]$, $\sqrt{(z-2)^2+(y-2)^2}\in[1,2]$) to:
\begin{equation}
\left\{
    \begin{array}{l}
     \dot x=-\dfrac{x}{\theta}\\
     \dot y=2-z\\
     \dot z=y-2.
     \end{array}
\right.\nonumber
\end{equation}
The same as in the original system, domain of definition can be expanded to the whole $\mathbb{R}^3$ space.

Consequently, the Poincar\'e map would be as follows:

When $(x,y)\in [0,2]\times[0,1]$,
\begin{equation}
\left\{
    \begin{array}{l}
     x_{n+1}=0\\
     y_{n+1}=2y_n-1.
     \end{array}
\right.\nonumber
\end{equation}

When $(x,y)\in [0,2]\times[-1,0)$,
\begin{equation}
\left\{
    \begin{array}{l}
     x_{n+1}=1\\
     y_{n+1}=2y_n+1.
     \end{array}
\right.\nonumber
\end{equation}

The attractor $\mathbb{A}_L'$ of the modified system would be:
(assuming $\theta=\arccos{(y-2)/\sqrt{(z-2)^2+(y-2)^2}}$):

In region $R_A$, $x=0$ or $2$;

in region $R_B$, $(\pi/2)\times(x/\theta)=0$ or $2$;

in region $R_C$, $x=0$ or $2$.

The attractor is shown in Figure (\ref{fig:Nonstrange}).
We can calculate its box-counting dimension to be:
\begin{align}
d_b(\mathbb{A}_L')=\lim_{\epsilon\rightarrow0}\dfrac{\log N(\epsilon, \mathbb{A}_L')}{\log(1/\epsilon)}=2,
\end{align}
which is an integer dimension.
Actually, the attractor $\mathbb{A}_L'$ is just two orientable surfaces folded together.
Hence, it is no longer a strange attractor anymore.

Exactly as in the original system, the modified attractor can be proved to be chaotic.
And we can also calculate the commonly used indicator of chaos: Lyapunov exponents \cite{robinson} for the model system at fixed points.
Lyapunov exponents are solved in each direction as: $\ell_x=0$, $\ell_y=1$, and $\ell_z=-1$ in region $R_A$ (In other regions, $\ell_x=\ell_y=\ell_z=0$).
It is found that there is a positive Lyapunov exponent $\ell_y=1$ denoting exponential expansion in the $y$ direction, exactly as in the original model system.

So, it is clear that the modified attractor is a nonstrange chaotic attractor.

Now, we construct a potential function $\Phi$ for the new dynamical system by first appointing a new seed function $F(x)$ defined in $[0,2]$:
\begin{equation}
F(x)=
     1-\cos(\pi x) \, ,\quad x\in\left[0\, , \;2\right].\nonumber
\end{equation}

Hence, the potential function can be represented as:

\begin{enumerate}[1.]
\item In the right part of region $R_A$, where $x,y,z\in[0,2]$:
\begin{equation}
    \Phi_A=\left(\frac{\theta}{\pi}\right)F(x).\nonumber
\end{equation}

\item In region $R_B$, where $y\in[2,4]$, $z\in[0,2]$, $\sqrt{(z-2)^2+(y-2)^2}\in[1,2]$, $x\in[0 \, , \; 4/\pi \times \theta]$:
\begin{equation}
    \Phi_{B}=\left(\frac{\theta}{\pi}\right) F\left(\frac{\pi x}{2 \theta}\right).\nonumber
\end{equation}

\item In region $R_C$, where $x=2$, $y\in[-2,4]$, $z>2$, $\sqrt{(z-2)^2+(y-2)^2}\geqslant 1$, $\sqrt{(z-2)^2+(y-3/2)^2}\leqslant 5/2$:
\begin{equation}
    \Phi_{C}=0.\nonumber
\end{equation}
\end{enumerate}

Here, $\Phi=0$ corresponds to the attractor.

We can further decompose the system as with the original model system:
\begin{align}
\dot{\mathbf{x}}=\mathbf{f}(\mathbf{x})=-D \nabla\Phi(\mathbf{x})+Q \nabla\Phi(\mathbf{x}).\nonumber
\end{align}

Then, in region $R_A$:
\begin{equation}
    \nabla\Phi_A=\dfrac{1}{\pi}
    \left(
        \begin{array}{l}
         \theta{F}'(x)\\
         -\dfrac{2-z}{(z-2)^2+(y-2)^2}{F}(x)\\
         \dfrac{2-y}{(z-2)^2+(y-2)^2}{F}(x)
        \end{array}
    \right).\nonumber
\end{equation}

\begin{align}
    D_A&=-\frac{\mathbf{f}_A\cdot\nabla\Phi_A}{\nabla\Phi_A\cdot\nabla\Phi_A}I \\\nonumber
       &=\dfrac{\pi \dfrac{(2-y)z+(2-z)y}{(z-2)^2+(y-2)^2}{F}(x)}
       {\theta^2\left({F}'(x)\right)^2
       +\dfrac{1}{{(z-2)^2+(y-2)^2}}\left({F}(x)\right)^2}I.
\end{align}

The decomposed gradient part would then be:
\begin{align}
    D_A\nabla\Phi_A=
    &\dfrac{\dfrac{(2-y)z+(2-z)y}{(z-2)^2+(y-2)^2}{F}(x)}
       {\theta^2\left({F}'(x)\right)^2
       +\dfrac{\left({F}(x)\right)^2}{{(z-2)^2+(y-2)^2}}}\\
    &\left(
        \begin{array}{l}
         \theta{F}'(x)\\
         -\dfrac{2-z}{(z-2)^2+(y-2)^2}{F}(x)\\
         \dfrac{2-y}{(z-2)^2+(y-2)^2}{F}(x)
        \end{array}
    \right).\nonumber
\end{align}

\begin{widetext}
Also, we can find the decomposed rotation part by finding $Q_A$ as:
\begin{align}
    Q_A&=\frac{\mathbf{f}_A\times\nabla\Phi_A}{\nabla\Phi_A\cdot\nabla\Phi_A} \\\nonumber
       &=\dfrac{\pi}{\theta^2\left({F}'(x)\right)^2
       +\left({F}(x)\right)^2}
&\left(
      \begin{array}{ccc}
        0&      -y\theta{F}'(x)&      z\theta{F}'(x) \\
        y\theta{F}'(x)&       0&      -\dfrac{(2-z)z-(2-y)y}{(z-2)^2+(y-2)^2}{F}(x) \\
        -z\theta{F}'(x)&       \dfrac{(2-z)z-(2-y)y}{(z-2)^2+(y-2)^2}{F}(x)&         0
      \end{array}
\right).
\end{align}

The decomposed rotation part would then be:
\begin{align}
    Q_A\nabla\Phi_A=
    &\dfrac{\pi}{\theta^2\left({F}'(x)\right)^2
       +\left({F}(x)\right)^2}\\\nonumber
    &\left(
        \begin{array}{l}
         -\dfrac{(y-2)z+(z-2)y}{(z-2)^2+(y-2)^2}\theta{F}'(x){F}(x)\\
         y\theta^2\left({F}'(x)\right)^2+ \dfrac{(y-2)^2y-(y-2)(z-2)z}{\left((z-2)^2+(y-2)^2\right)^2}\left({F}(x)\right)^2\\
         -z\theta^2\left({F}'(x)\right)^2+ \dfrac{(y-2)y(z-2)-(z-2)^2z}{\left((z-2)^2+(y-2)^2\right)^2}\left({F}(x)\right)^2
        \end{array}
    \right).\nonumber
\end{align}

\end{widetext}

The properties of the gradient part and the rotation part corresponds exactly to the original model system.
That is: the gradient part converges to zero when approaching the attractor;
motion on the attractor is determined by the rotation part.

\begin{figure}
\begin{center}
  \includegraphics[width=0.5\textwidth]{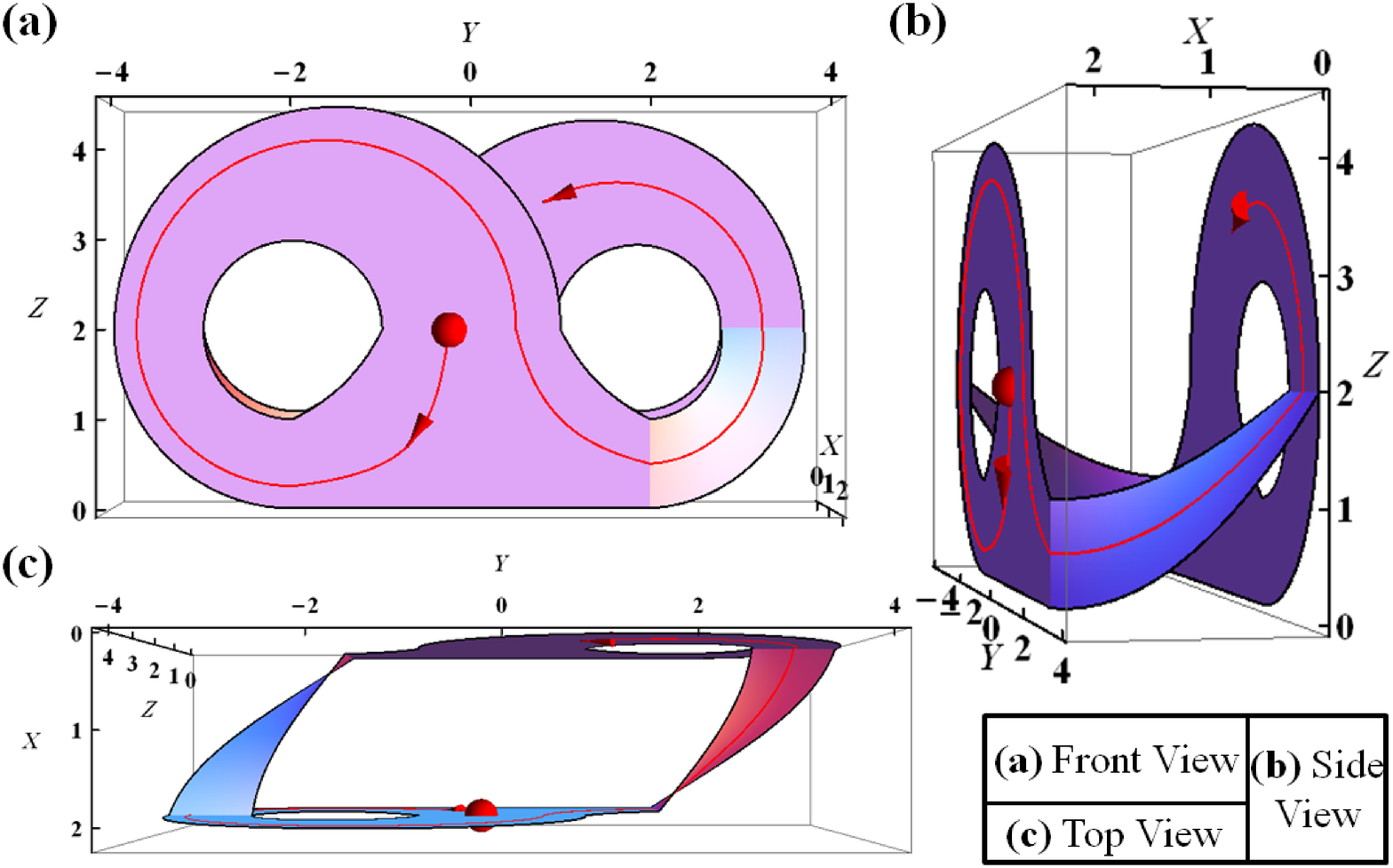}
\end{center}
\caption{
\begin{flushleft}
\textbf{Figure 8} $|$ \textbf{Nonstrange Chaotic Attractor.}\\
We change the expression of the system a little, so that the attractor is just two orientable surfaces folded together, rather than a fractal structure.
However, it remains to be a chaotic attractor.
We hereby demonstrate the nonstrange chaotic attractor viewed from (a), front; (b), side; and (c), top.
The trajectory running from point (2, 1/4, 2) is also shown in the figure.
\end{flushleft}
}
\label{fig:Nonstrange}
\end{figure}

\subsection{Strange Nonchaotic Attractor}
A strange nonchaotic attractor can also be constructed.

We simply take the gradient of potential function $\Psi$ of the original system in each region of definition:
\begin{equation}
\left\{
    \begin{array}{l}
     \dot x=-\partial_x\Psi\\
     \dot y=-\partial_y\Psi\\
     \dot z=-\partial_z\Psi.
     \end{array}
\right.\nonumber
\end{equation}

If we take left part of region $R_A$ for example, the vector field would be:
\begin{equation}
    \mathbf{f}_A=
    \left(
        \begin{array}{l}
         -\left(\dfrac{4}{9\pi}\theta+\dfrac{5}{9}\right)\mathcal{F}'(x)\\
         -\dfrac{4}{9\pi} \dfrac{z-2}{(z-2)^2+(y-2)^2}\mathcal{F}(x)\\
         \dfrac{4}{9\pi} \dfrac{y-2}{(z-2)^2+(y-2)^2}\mathcal{F}(x)
        \end{array}
    \right).\nonumber
\end{equation}

The resultant ODE system defined by the gradient is a dynamical since it is Lipschitz continuous in each region.
And with existence and uniqueness of the flow guaranteed by Lipschitz continuity, conditions for the system being a dynamical system can be satisfied and extended to include boundaries.

The system would converge downward the potential function $\Psi$ until reaching the states where $\Psi=0$.
Consequently, attractor of this system is characterized by $\Psi=0$, as in the original model system.
Hence, the gradient system's attractor is the same attractor $\mathbb{A}_L$ of the original model system, whose box-counting dimension:
\begin{align}
    d_b(\mathbb{A}_L)=\lim_{\epsilon\rightarrow0}\dfrac{\log N(\epsilon,\mathbb{A}_L)}{\log(1/\epsilon)}=2+ln(2)/ln(3).
\end{align}
Hence, the system has a strange attractor.

Since $\nabla\Psi=0$ when $\Psi=0$, the dynamical system is not sensitively dependent upon initial conditions when restricted to the attractor.
So, the attractor is not chaotic.
Also, its Lyapunov exponents at the fixed points (where $x\in\mathbb{C}$, $y=z=0$) would be: $\ell_x=-(1/3)^n\times 9\pi^2$, $\ell_y=\ell_z=0$.
Hence, it's a strange nonchaotic attractor.

Decomposition of this system would give: $D(\mathbf{x})=I$ and $Q(\mathbf{x})=0$.
Thus, $D\nabla\Psi(\mathbf{x})=-\mathbf{f}(\mathbf{x})$ is just the reversed gradient system.


\subsection{Chaotic Attractor versus Strange Attractor}
The previous two examples show that the concepts of chaotic attractor and strange attractor do not imply each other.
Under our framework of decomposition (equation (\ref{decomp})) here:
\begin{align}
\dot{\mathbf{x}}=\mathbf{f}(\mathbf{x})=-D \nabla\Psi(\mathbf{x})+Q \nabla\Psi(\mathbf{x}).\nonumber
\end{align}

Since $\dot\Psi=\nabla\Psi\cdot\dot{\mathbf{x}}=\nabla\Psi\cdot(D \nabla\Psi)$, $\Psi$ decreases monotonically according to the gradient component $D \nabla\Psi$ of the vector field $f$.
Then the attractor is naturally characterized by $D \nabla\Psi=0$.
So, whether the attractor is a strange attractor is determined by the gradient part of the vector field.

Sensitive dependence upon initial conditions when restricted to the attractor, however, is determined by the rotation part of the vector field: $Q \nabla\Psi$.
Once the system has evolved to the limit set, $D \nabla\Psi$ would equal to zero, and $Q \nabla\Psi$ would be prevalent.
Hence, the rotational vector field on the attractor causes the expansion of the state space, leading to dynamical sensitivity.
Conversely, when $Q \nabla\Psi=0$, chaotic motion on the attractor would not exist.
In this sense, nonzero rotation part of the dynamical system is a necessary condition for causing hyperbolic chaos \cite{Kuznetsov}.

To sum up, gradient part and rotation part of the vector field are responsible for the creation of strange attractor and chaotic attractor correspondingly.
Although they are both affected by the geometrical configuration of the potential function $\Psi$, they denote dissipation and circulation respectively.


\section{Conclusion}
In the present paper, it is shown that potential functions with monotonic properties can be constructed in continuous dissipative chaotic systems with strange attractors.
The potential function here is a continuous function in phase space, monotonically decreasing with time and remains constant if and only if limit set is reached.
This definition is a natural restriction of generic dynamics since it is a direct generalization of Lyapunov function and corresponds to the concept of energy.

Potential function defined this way also implies that the dynamics can be decomposed into two parts: a gradient part, dissipating energy potential; and a rotation part, conserving energy potential.
The gradient part drives the system towards the attractor while the rotation part perpetuates the system's circular motion on the attractor.

To demonstrate the power of this framework in chaotic systems, we simplify the geometric Lorenz attractor, and prove by definition that it is a chaotic attractor.
Then we analytically and explicitly construct a suitable potential function for the attractor, which, to our best knowledge, is the first example in chaotic dynamics.
The potential function reveals the fractal nature of the chaotic strange attractor.

We further analyze the concept of chaotic attractor and strange attractor with our decomposition.
It is found that chaotic attractor originates in the rotation part, prompting the state space of the attractor to expand; while strange attractor originates in the gradient part, causing initial states attracted to complex limit set.

\section*{Acknowledgements}
The authors would like to express their sincere gratitude
to Xinan Wang, Ying Tang, Song Xu and Jianghong Shi for their constructive advice throughout this work.
The authors also appreciate valuable discussion with James A. Yorke, David Cai, and Shijun Liao.

This work is supported in part by the Natural Science Foundation of China No.~NFSC61073087, the National 973 Projects No.~2010CB529200, and the Natural Science Foundation of China No.~NFSC91029738.


%

\end{document}